\newtheorem{lemma}{Lemma}[section]
\newtheorem{theorem}[lemma]{Theorem}
\newtheorem{proposition}[lemma]{Proposition}
\newtheorem{corollary}[lemma]{Corollary}
\newtheorem{remark}[lemma]{Remark}
\newtheorem{definition}[lemma]{Definition}
\newtheorem{conjecture}[lemma]{Conjecture}
\newcommand{\R}{{\mathbb R}}
\newcommand{\T}{\mathcal{T}}
\begin{document}

\title[Encoding and Constructing 1-Nested Networks]
{Encoding and Constructing 1-Nested Phylogenetic Networks with Trinets
}

\author{K.~T.~Huber        \and
        V.~Moulton.
}

\thanks{K.~T.~Huber and  V.~Moulton,
   School of Computing Sciences,
University of East Anglia,
Norwich, NR4 7TJ,
United Kingdom.\,\,
              \email{Katharina.Huber@cmp.uea.ac.uk}           
\email{vincent.moulton@cmp.uea.ac.uk}
}

\date{\today}

\maketitle

\begin{abstract}
Phylogenetic networks are a generalization of phylogenetic
trees that are used in biology 
to represent reticulate or non-treelike 
evolution.
Recently, several algorithms have been 
developed which aim to construct 
phylogenetic networks from biological data
using {\em triplets}, i.e. binary phylogenetic trees
on 3-element subsets of a given 
set of species. However, a fundamental problem
with this approach is that the triplets displayed by
a phylogenetic network do not necessary 
uniquely determine or {\em encode} the network.
Here we propose an alternative approach to 
encoding and constructing phylogenetic networks, which 
uses phylogenetic networks on 3-element subsets
of a set, or {\em trinets}, rather than triplets. 
More specifically, we show that for 
a special, well-studied type of phylogenetic network called 
a 1-nested network, the trinets displayed by 
a 1-nested network always encode 
the network. We also present an efficient algorithm 
for deciding whether a {\em dense} set of trinets (i.e. 
one that contains a trinet on every 3-element subset 
of a set) can be 
displayed by a 1-nested network or not and, if
so, constructs that network.
In addition, we discuss some potential new directions
that this new approach opens up for constructing
and comparing phylogenetic networks. 
\end{abstract}

\noindent {\bf Keywords} phylogenetic network, triplets, trinets,
reticulate evolution\\

\noindent {\bf AMS classification}: 05C05, 92D15, 68R05

\section{Introduction}

Phylogenetic networks are a generalization of
phylogenetic trees that are used 
in biology to represent reticulate or non-treelike evolution
(cf.  \cite{HRS11,N11} for recent overviews).
There are various types of phylogenetic
networks, but in this paper we shall focus on 
phylogenetic networks that {\em explicitly} 
represent the evolution of a given set of species. Such 
networks (whose formal definition is presented in 
Section~\ref{preliminaries}) can be essentially
regarded as directed acyclic graphs
having a single root, whose 
internal vertices represent ancestral species
and whose leaves represent the set species
(see e.g. Fig.~\ref{dontencode}). 
They have been used, 
just to name a few examples, to represent the evolution
of viruses \cite{SH05}, bacteria \cite{PKFDF03}, 
plants \cite{LSHPOM09}, and fish \cite{KDSASBES07}.

Recently, several algorithms have been 
developed which aim to 
construct phylogenetic networks (cf. \cite{HRS11,N11}). 
However, as stated in \cite[p.xi]{HRS11},
``While there is a great
need for practical and reliable computational 
methods for inferring rooted phylogenetic networks 
to {\em explicitly} describe evolutionary scenarios 
involving reticulate events, generally speaking, such
methods do not yet exist, or have not yet matured 
enough to become standard tools".

Probably one of the main reasons for this 
is that we do not yet have a very good 
understanding of how to build up complex
phylogenetic networks from
simpler structures. An important case in point
is the construction of phylogenetic networks from 
phylogenetic trees. Even though there has been a great deal 
of recent work on this problem (cf. \cite[Chapter 11]{HRS11},
\cite[Section 2]{N11}),
especially concerning the construction of 
networks from {\em triplets} (i.e. binary phylogenetic trees 
with three leaves) \cite{GH11,HIKS11,IKKSHB09,IK11a,JNS06,JS06,TH09}),
there is a fundamental obstacle to this approach: The trees 
displayed by a phylogenetic network do not necessarily determine
or {\em encode} the network \cite{GH11} 
(even on 3 species -- see e.g. Fig.~\ref{dontencode})
and, in fact,  we do not even know 
when a phylogenetic network is uniquely determined by 
{\em all} of the trees that it displays \cite{W11}.

\begin{figure}[b] \centering
\includegraphics[scale=0.5]{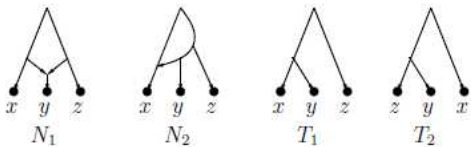}
\caption{Two distinct phylogenetic networks $N_1$ and 
$N_2$ with leaf set $\{x,y,z\}$ that display 
the same set $\{T_1,T_2\}$ of phylogenetic trees.
In particular, neither of these two networks is encoded by this 
set of trees.}
\label{dontencode}
\end{figure}

As an alternative approach to tackling the
problem of constructing phylogenetic networks,
in this paper we shall investigate
the following strategy: Instead of 
constructing phylogenetic networks from trees, try to build them
up from (simpler) phylogenetic networks. More
specifically, we investigate how to 
construct phylogenetic networks from {\em trinets}, 
that is, phylogenetic networks having just three leaves
(see, for example, the networks $N_1$ and $N_2$ in Fig.~\ref{dontencode}).

One of the main difficulties
that we had to overcome before being
able to put this strategy 
into practice was to find an appropriate 
definition for the set of 
trinets that is displayed by a phylogenetic 
network (see Definition~\ref{tripdisp}).
However, with this definition in hand, we 
are able to show that any {\em 1-nested network} -- a 
quite simple and well-studied 
type of phylogenetic network \cite{CLR11} --
is always encoded by the set of
trinets it displays (Theorem~\ref{encode}).
Moreover, using this fact, we 
provide a polynomial-time 
algorithm for deciding whether 
a given {\em dense} set of trinets (i.e. 
one that contains a trinet on every 3-element subset 
of a set) can be 
displayed by a 1-nested network or not and, if
so, constructs that network (see 
Fig.~\ref{algorithm:multicons} and Theorem~\ref{build}).

We now describe the contents of the rest of the paper. 
In Section~\ref{preliminaries}
we introduce some relevant, basic terminology concerning 
phylogenetic networks. In Section~\ref{recoverable-nets}
we define the rather natural concept of a recoverable network, and 
show that, although a phylogenetic network need not be
recoverable in general, a 1-nested network always is. 
In the following section, we show 
that a recoverable phylogenetic network
is 1-nested if and only if all of its displayed
trinets are 1-nested (Theorem~\ref{green}).
Using this fact and certain operations on 1-nested networks that are 
closely related to those presented 
in \cite{CLR11} and 
that are presented in Section~\ref{cherries-catuses-etc},
we then establish Theorem~\ref{encode} in Section~\ref{encoding-1-nested}. 
As a corollary, we obtain a new 
(and efficiently computable) proper metric on the set of
1-nested networks all having the same
leaf set (see Corollary~\ref{metric}). 
In Section~\ref{1-nested-from-dense} we present our main 
algorithm for checking whether or not a dense
set of trinets is displayed by a 1-nested network.
We conclude in Section~\ref{discussion}
with a discussion on some possible future 
directions, including some ideas about
how trinets might be used in practical applications.

\section{Preliminaries}\label{preliminaries}

For the rest of this paper, 
$X$ is a non-empty, finite set (which 
will usually correspond to a set of species
or organisms). For consistency, 
we follow the notation presented in \cite{CLR11}
where appropriate.

An {\em rDAG} $N = (V,A)$ is a 
directed acyclic graph (DAG)
with non-empty vertex set $V=V(N)$, non-empty arc set $A=A(N)$ 
(with no multiple 
arcs) and single root $\rho=\rho_N$ (i.e. a DAG with
precisely one source $\rho$). We 
let $<_N$ denote the usual partial order on $V$ induced by $N$.
The {\em underlying graph} of $N$ is denoted $\underline{N}$.
A {\em cycle} in $\underline{N}$ is a subset 
$C = \{v_1,v_2,\dots,v_n\} \subseteq V(\underline{N})$, $n \ge 3$, 
such that $\{v_i,v_{i+1}\} \in E(\underline{N})$ for
all $1 \le i \le n-1$ and $\{v_1,v_n\} \in E(\underline{N})$.
If $C$ is some cycle in $\underline{N}$
and there is some $v\neq w \in V$ so that 
the union of all of the arcs in $N$ having
both vertices in $C$
is the union of two directed paths in $N$
that both start at $v$ and end at $w$, 
then $v$ ($w$) is called the {\em split (end)} vertex of $C$.
We denote an arc $a\in A$ 
with tail $x$ ($= tail(a)$), and head $y$ 
by $(x,y)$. We call $(x,y)$ a {\em cut arc (of $N$)} if 
the removal of the edge $\{x,y\}$ 
from $E(\underline{N})$ disconnects $\underline{N}$.
A vertex $v \in V$ is a called a {\em leaf} of $N$ if
$indegree(v)=1$ and $outdegree(v)=0$. We denote
the set of leaves of $N$ by $L(N)$. 
Every vertex of $N$ that is neither the root $\rho_N$ 
nor has outdegree 0 is called an {\em interior} 
vertex of $N$. A {\em tree vertex} $v \in V$ 
is an interior vertex of $N$ with $indegree(v)=1$, 
and a {\em hybrid vertex}
$v \in V$ is an interior vertex with $indegree(v) \ge 2$. 
Note that neither the root $\rho_N$ nor a leaf of $N$ 
is a tree vertex and that a hybrid vertex of $N$
cannot be a leaf. 

Now, an {\em $X$-rDAG} is an rDAG $N=(V,A)$ with leaves 
uniquely labeled by the elements in $X$ (i.e. there is a 
map $\phi_N:X \to V$ such that $\phi$ maps $X$
bijectively onto $L(N)$). We will usually 
just assume $L(N)=X$ in case the labeling map 
is clear from the context.
A {\em phylogenetic network $N=(V,A)$ (on $X$)} 
is an $X$-rDAG such that every tree vertex 
has outdegree at least 2 and 
every hybrid vertex has outdegree at least 1.
If $N$ is such a network and 
$N'=(V',A')$ is a phylogenetic network on a 
non-empty finite set $Y$,
then $N$ is {\em isomorphic} to $N'$ if
there is a bijection $\xi:X \to Y$
and a directed graph isomorphism 
$\iota:V \to V'$ between $N$ and $N'$
such that $\phi_{N'} = \iota \circ \phi_N \circ \xi^{-1}$. 
In particular, in case $Y=X$ we consider
$X$ as being a subset of both $V$ and $V'$, 
and hence $N$ is isomorphic to $N'$
if and only if $\iota$ 
restricted to $X$ is the identity map on $X$.

A phylogenetic network $N=(V,A)$ on $X$ is  
\begin{itemize}
\item a {\em bush (on $X$)} if it is 
isomorphic to the phylogenetic network with vertex set 
$V = X \cup \{v\}$, $v \not\in  X$, 
and arc set $A = \{(v,x) \,:\, x \in X\}$,
\item a {\em two-leafed network (on $X$)}
if $X=\{x,y\}$, 
and $N$ is isomorphic to the phylogenetic network 
on $X$ with vertex set 
$V =\{u,v,w,x,y\}$ and arc set
$A = \{(u,w),(u,v),(v,w),(v,x),(w,y)\}$, 
\item {\em binary} if all of its hybrid vertices 
have indegree 2 and outdegree 1
and all of its tree vertices have outdegree 2, 
\item {\em 1-nested} if every pair of
cycles in $\underline{N}$ intersect in 
at most 1 vertex\footnote{Note that in \cite{CLR11},
1-nested networks are defined in such a way 
that every hybrid vertex has indegree 2 -- we
do not make this assumption, but we will use the 
same name rather than introducing another term.},
\item a {\em galled tree} if every 
pair of cycles in $\underline{N}$ is disjoint, 
\item a {\em (rooted) phylogenetic tree} if $\underline{N}$ is a tree, and
\item a {\em trinet} if $|L(N)|=|X|=3$.
\end{itemize}
Note that a 1-nested network $N$ on $X$ with $|X|=1$ is a bush
with arc set 
consisting of precisely one arc, and if $|X|=2$ then $N$ is isomorphic to 
either a two-leafed network or a phylogenetic tree with 2 leaves.

In Fig.~\ref{trinets} we picture the set of
all possible non-isomorphic
1-nested trinets on $\{x,y,z\}$.
If $N$ is a 1-nested trinet on $X$, $|X|=3$,
that is not isomorphic to a phylogenetic tree on $X$,
then we say that $t \in X$ is {\em at the bottom of $N$} 
if it corresponds to one of the vertices 
represented by larger dots in Fig.~\ref{trinets}, 
and we say that $t$ {\em hangs off the side of $N$} if it corresponds 
to one of the vertices represented by a square in that figure
(note that, in particular, there may 
be more than one element at the
bottom of a trinet). 

\begin{figure}[t] \centering
\includegraphics[scale=0.5]{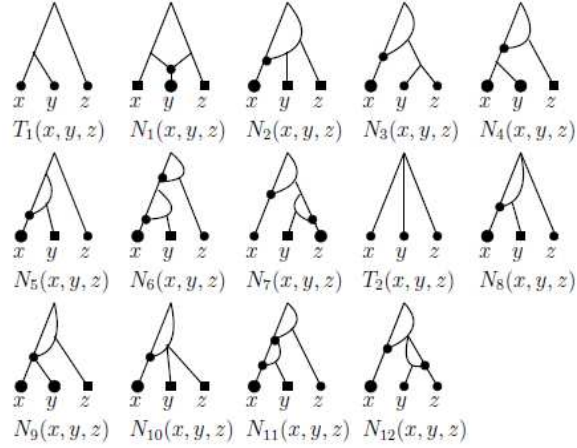}
\caption{The fourteen possible
non-isomorphic, 1-nested trinets on 
the set $\{x,y,z\}$. Directions
on arcs are omitted for clarity; internal vertices 
indicated with a dot are all hybrid vertices. Leaves 
that are at the bottom of a trinet are indicated with large
dots and vertices hanging off the side of a trinet 
with a square.}
\label{trinets}
\end{figure}

Finally, let $\T$ denote a non-empty set of trinets
such that $L(T) \in {X \choose 3}$ for all $T \in \T$
(which we shall also call a {\em trinet set (on $X$)}
for short). 
If $Y \subseteq X$, $|Y|\ge 3$, we let $\T_Y$ be the subset of $\T$ 
consisting of those trinets $T \in \T$ 
with $L(T) \subseteq Y$.
In addition, we call $\T$ {\em dense (on $X$)} 
if ${X \choose 3}= \{ L(N) \,:\, N \in \T \}$ and 
$|\T| = {|X| \choose 3}$.

\section{Trinets and recoverable networks}\label{recoverable-nets}

In this section, we investigate networks that display only
1-nested trinets. In particular, we show that
even if every trinet displayed by a network $N$ is 1-nested, 
it does not necessarily follow that $N$ is 1-nested. 
In addition, we shall introduce a rather natural 
condition on $N$ (that it is a `recoverable network') 
for which this statement does in fact hold (see 
Theorem~\ref{green} in the next section).

Suppose $N=(V,A)$ is a phylogenetic network on $X$, $|X|\geq 3$,
and $Y$ is a non-empty subset of $V-\{\rho_N\}$. 
Let  $v(Y)$ to be the last vertex in $V-Y$ that lies on all paths 
in $N$ from $\rho_N$ to every $y \in Y$.
Note that if $Y$ consists of a single vertex $y$, then $v(\{y\})$ is
known as the {\em immediate dominator of $y$} \cite{LT79}
(see also \cite[p. 143]{HRS11} where it is called the 
{\em lowest stable ancestor of $y$})). 

We now present a key definition (see also Fig.~\ref{big}):

\begin{definition}\label{tripdisp}
Given a phylogenetic network $N$ on $X$ and 
some $Y \in {X \choose 3}$, we define the 
{\em trinet on $Y$ displayed by $N$} to be the trinet $N_Y$
with leaf set $Y$ which is obtained from $N$ by 
first taking the network $\tilde{N}$ consisting of the 
union of all directed paths in $N$ starting
at $v(Y)$ and ending at some element in $Y$, 
and then repeatedly first 
(i) suppressing all vertices $v$ 
with $indegree(v)=outdegree(v)=1$,
and then
(ii) suppressing all multiple arcs that might result, until a 
trinet on $Y$ is obtained.
Put $Tr(N) = \{ N_Y \,:\, Y \in {X \choose 3}\}$.
\end{definition}

Given a phylogenetic network $N$ on $X$, we say that a
trinet set $\T$ on $X$ 
is {\em displayed} by $N$ if $\T\subseteq Tr(N)$.
Moreover, we say that $\T$  {\em encodes} 
$N$ if $\T\subseteq Tr(N)$ and, if $N'$ is any other
 phylogenetic  network on $X$ with  $\T\subseteq Tr(N')$, then 
$N'$ is isomorphic to $N$. 

Note that in Definition~\ref{tripdisp} 
it is necessary to consider (at least) 3-element subsets of $X$, 
since if `binets' are defined in a similar 
way for 2-element subsets, then the resulting 
set would not in general encode the
network (even if the network is a tree). Also note that
we do not define a trinet on $Y$
displayed by $N$ to  be 
the network consisting of the union of all directed paths 
in $N$ to the elements of $Y$ as this can result in 
networks with vertices having  in- and outdegree 1, that is,
networks that are not phylogenetic networks.

\begin{figure}[t] \centering
\includegraphics[scale=0.5]{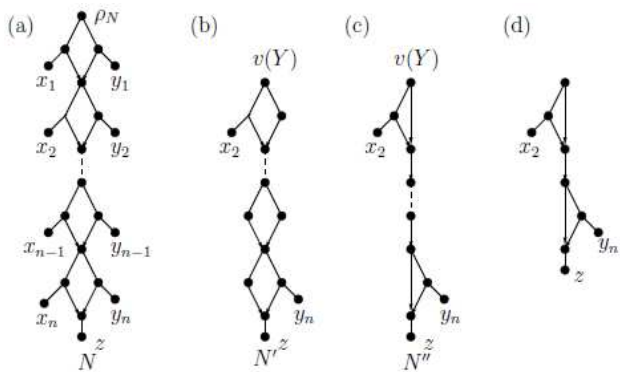}
\caption{(a) A phylogenetic 
network $N$ on $X=\{x_1,\dots,x_n,z,y_1,\dots, y_n\}$, 
$n\geq 1$. (b) The subnetwork
$N'$ obtained by taking the union of the directed
paths from $v(Y)$ to every element in $Y=\{x_2,y_n,z\}$. (c) 
The subnetwork $N''$ obtained from $N'$ by suppressing all multiple 
arcs of $N'$. (d) The trinet obtained from $N''$ by suppressing all vertices 
$v\in V(N'')$ with $indegree(v)=outdegree(v)=1$. Directions of arcs
are omitted when clear.}
\label{big}
\end{figure}

The proof of the following lemma is straight-forward and is omitted:

\begin{lemma}\label{1-nested-implies trinets}
Suppose that $N$ is a 1-nested network on $X$, $|X|\ge 3$.
Then any element in $ Tr(N)$ 
is isomorphic to one of the fourteen 
trinets on $\{x,y,z\}$ presented in Fig.~\ref{trinets}.
\end{lemma}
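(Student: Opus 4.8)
The plan is to show that when we restrict a 1-nested network $N$ to the trinet on some triple $Y \in {X \choose 3}$, the resulting structure $N_Y$ must be 1-nested with exactly three leaves, and hence (by exhaustion) isomorphic to one of the fourteen trinets in Fig.~\ref{trinets}. Since the fourteen trinets in that figure are by assertion a complete list of the non-isomorphic 1-nested trinets on $\{x,y,z\}$, the whole lemma reduces to the single claim that \emph{$N_Y$ is itself a 1-nested network}. So the first step is to reduce the statement to this claim and record that the classification into fourteen types is a finite case check that can be taken as given.

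First I would analyze the construction of $N_Y$ step by step as laid out in Definition~\ref{tripdisp}. Let $\tilde{N}$ be the union of all directed paths in $N$ from $v(Y)$ to the elements of $Y$. The key observation is that $\underline{\tilde{N}}$ is a subgraph of $\underline{N}$, so every cycle in $\underline{\tilde{N}}$ is a cycle in $\underline{N}$; hence if two cycles in $\underline{\tilde{N}}$ shared two vertices, the same two cycles would share two vertices in $\underline{N}$, contradicting that $N$ is 1-nested. Thus $\tilde{N}$ inherits the 1-nested property. The remaining work is to verify that the two clean-up operations -- (i) suppressing degree-two (indegree one, outdegree one) vertices and (ii) suppressing resulting multiple arcs -- preserve the 1-nested property and do not create any pathology.

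The main obstacle, and the place I would spend the most care, is precisely these suppression operations. Suppressing a vertex of indegree and outdegree one replaces a path $u \to v \to w$ by a single arc $u \to w$; I would argue that this operation maps cycles in $\underline{N}$ bijectively to cycles in the suppressed graph (a cycle through $v$ simply loses the vertex $v$ and has its two incident edges merged into one), so the intersection pattern of any two cycles can only shrink, never grow, and the 1-nested condition is preserved. The subtler point is operation (ii): suppression can create a multiple arc exactly when the two parallel paths of a cycle collapse down to two parallel arcs between the same endpoints $v$ and $w$ (a ``2-cycle''), and deleting one of these parallel arcs removes that degenerate cycle entirely. I would check that this deletion again cannot increase any intersection of surviving cycles, and that after finitely many rounds we reach a genuine phylogenetic network with no indegree-one-outdegree-one vertices and no multiple arcs. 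Finally I would confirm $|L(N_Y)| = 3$: the leaves of $N_Y$ are exactly the elements of $Y$, since the construction starts from the paths ending at the three elements of $Y$ and the suppression steps never remove a leaf. Together these give that $N_Y$ is a 1-nested network on a three-element set, so $N_Y$ is a 1-nested trinet, completing the argument.
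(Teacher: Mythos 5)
The paper itself omits the proof of this lemma, stating only that it is straightforward, and your argument supplies exactly the routine verification that is intended: the restriction $\tilde N$ inherits the 1-nested property as a subgraph of $N$, the two suppression operations in Definition~\ref{tripdisp} neither create new cycle intersections (a suppressed vertex has total degree $2$ and so lies in at most one cycle, and multiple-arc suppression only destroys degenerate $2$-cycles) nor disturb the three leaves in $Y$, and the finite classification of 1-nested trinets in Fig.~\ref{trinets} (which the paper takes as given) finishes the job. The proposal is correct and matches the argument the authors evidently have in mind.
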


\begin{remark}
If $N$ is a 1-nested network on $X$, $|X|\ge 3$, then 
$N$ is binary if every element in $Tr(N)$
is isomorphic to either $T_1(x,y,z)$ or one of 
$N_i(x,y,z)$, $1 \le i \le 7$.
Moreover,  binary level-1 networks and galled trees
(as defined in \cite{CLR11})
can be characterized in a similar manner. 
\end{remark}

Now, suppose that $N$ is a phylogenetic network 
on $X$ such that every trinet in $Tr(N)$ is isomorphic 
to one of the fourteen trinets presented 
in Fig.~\ref{trinets}. It is tempting
to think that this should imply that $N$ is
1-nested. However, this is not the case. For example, even 
if $N$ is a phylogenetic network such that 
every trinet in $Tr(N)$ is isomorphic to either 
$T_1(x,y,z)$ or $T_2(x,y,z)$ in Fig.~\ref{trinets}, then $N$ is 
not necessarily isomorphic to a phylogenetic
tree (see  e.g.~Fig.~\ref{not-recoverable}).
\begin{figure}[t] 
\centering
\includegraphics[scale=0.5]{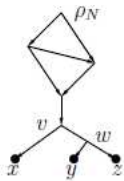}
\caption{A phylogenetic network $N$ on $\{x,y,z\}$ for which 
$Tr(N)$ consists of precisely the trinet $T_1(x,y,z)$ but $N$ is not
a phylogenetic tree on $\{x,y,z\}$. As before, directions 
are omitted for clarity when clear. Also only the vertices that
are leaves are marked by a dot.
 }
\label{not-recoverable}
\end{figure}
Even so, as we shall show in the 
next section (see Theorem~\ref{green}), 
the aforementioned statement is almost correct.

To this end, we now introduce a special class of networks.
Suppose that $N$ is a phylogenetic network
on $X$ with $|X|\geq3$.  We say
that a vertex $v\in V(N)$ is {\em reachable} from a vertex $w\in V(N)-v$
if there exists a directed path in $N$ starting at $w$ and ending in $v$.
In addition, if $z\in V(N)$ is a vertex of $N$ that lies on that path then 
we say that $v$ is reachable from $w$ by {\em crossing} $z$.
We denote by $v_N^*\in V(N)$ 
the (necessarily unique) vertex in $N$ for which there exist 
some distinct $x,y\in X$ with $v_N^*=v(\{x,y\})$
and, for all $\{u,v\}\in{V(N)\choose 2}-\{x,y\}$, either  
$v^*_N=v(\{u,v\})$ holds or $v(\{u,v\})$ is reachable from $v^*_N$.

Now, we say that $N$ is {\em recoverable} if $\rho_N=v_N^*$. 
We use the term recoverable, since for biological data 
it would not be possible to infer the structure of the network above 
$v_N^*$ in case $N$ is not recoverable, as 
there would be no way to `detect' vertices 
above $v_N^*$ using any pair of elements in $X$. 
As an illustration, the vertex
$v$ in the phylogenetic network $N$ on $\{x,y,z\}$ pictured in
Fig.~\ref{not-recoverable} is the vertex
$v_N^*=v(\{x,z\})$. Since $v_N^*\not=\rho_N$, $N$
is {\em not} recoverable.

We now characterize recoverable networks $N$ on $X$, $|X|\geq 3$, in terms
of a special type of vertex. A vertex $v\in V(N)$ is a {\em cut vertex} of $N$
if the deletion of $v$ (plus its incident edges) from $\underline{N}$
disconnects $\underline{N}$. We denote the resulting 
graph by $\underline{N}\backslash v$.
If, in addition, there exists a connected component $K$ of 
$\underline{N}\backslash v$ such that $V(K)\cap L(N)=\emptyset$ 
then we call $v$
a {\em separating vertex} of $N$. For example,
in Fig.~\ref{not-recoverable}
$v$ is a separating vertex of $N$ whereas
vertex $w$ is a cut vertex of $N$.

\begin{proposition}\label{green-cutvertex1}
Suppose $N$ is a phylogenetic network on $X$, $|X|\geq 3$. Then
the following statements hold:
\begin{itemize}
\item[(i)] If $N$ is not recoverable then $v^*_N$ is a cut vertex of $N$.
\item[(ii)] $N$ is recoverable if and only if $v^*_N$ is not a 
separating vertex of $N$.
\end{itemize}
\end{proposition}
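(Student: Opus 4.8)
The plan is to read off the structure of $N$ from the single vertex $v_N^*$ by partitioning $V(N)$ into the set $D$ of vertices reachable from $v_N^*$ (together with $v_N^*$ itself) and its complement $U=V(N)\setminus D$, and then to show that, in $\underline{N}$, these two pieces are glued together only through $v_N^*$. First I would record two facts that drive everything. Writing $v_N^*=v(\{x,y\})$ for the pair $x,y\in X$ supplied by the definition, $v_N^*$ lies on every directed path from $\rho_N$ to $x$ and to $y$, so it dominates both; and for any third leaf $\ell$ the vertex $v(\{x,\ell\})$ is reachable from $v_N^*$ and dominates $\ell$, whence every leaf of $N$ is reachable from $v_N^*$. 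Consequently $v_N^*$ is not itself a leaf, all of $X$ lies in $D$, and $\rho_N\in U$ precisely when $N$ is not recoverable.

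For (i) I would assume $N$ is not recoverable, so $U\ni\rho_N$ is non-empty, and prove that $\underline{N}$ has no edge joining $U$ to $D\setminus\{v_N^*\}$; since $D\setminus\{v_N^*\}$ contains the leaves and $U$ contains $\rho_N$, this exhibits $v_N^*$ as a cut vertex. An arc directed out of $D$ would force its head back into $D$ by reachability, so the only danger is a bypass arc $(a,b)$ with $a\in U$ and $b\in D\setminus\{v_N^*\}$, and this is where the real work lies. Such a $b$ cannot be a leaf, since a leaf has a unique in-arc, which would force $a\in D$; hence $b$ has a descendant leaf $\ell$, yielding a directed path $\rho_N\to a\to b\to\cdots\to\ell$ that misses $v_N^*$ (the prefix because $a\notin D$, the suffix because it stays below $b$). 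If $\ell\in\{x,y\}$ this already contradicts that $v_N^*$ dominates $\ell$; otherwise the key move is to route back to $x$ via $w=v(\{x,\ell\})$, which dominates $\ell$ and so lies on this path, is distinct from $v_N^*$ and therefore a proper descendant of it. Concatenating the part of the path up to $w$ with a path $w\to\cdots\to x$ (which exists because $w$ dominates $x$, and avoids $v_N^*$ because it stays below $w$) produces a path $\rho_N\to x$ missing $v_N^*$, contradicting domination of $x$. So no bypass arc exists.

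For (ii) I would treat the two implications separately, reusing (i). If $N$ is recoverable then $v_N^*=\rho_N$, and I claim $\rho_N$ can never be separating: any component $K$ of $\underline{N}\setminus\rho_N$ contains some $u\neq\rho_N$, and following out-arcs downward from $u$ — these never revisit the source $\rho_N$ — reaches a leaf that remains in $K$, so $K$ meets $L(N)$. Conversely, if $N$ is not recoverable, part (i) already makes $v_N^*$ a cut vertex, and the same partition shows the component of $\rho_N$ in $\underline{N}\setminus v_N^*$ lies entirely inside $U$; since every leaf is reachable from $v_N^*$ and hence lies in $D$, this component contains no leaf, so $v_N^*$ is separating. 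Together with the recoverable case, this yields the stated equivalence.

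The main obstacle is the bypass-arc step in (i): excluding an arc that enters the descendant region of $v_N^*$ from outside without passing through $v_N^*$. Everything else is routine reachability bookkeeping in a DAG, but this step genuinely needs the defining property of $v_N^*$ — that every pairwise meeting vertex $v(\{u,v\})$ sits at or below it — in order to manufacture the contradictory $\rho_N\to x$ path. I would flag that the construction relies on $|X|\ge 3$ (to supply the third leaf $\ell$) and on acyclicity (to guarantee the various sub-paths avoid $v_N^*$).
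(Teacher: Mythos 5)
Your proof is correct, and its part (ii) coincides with the paper's argument (recoverable case: every component of $\underline{N}\setminus\rho_N$ flows down to a leaf; non-recoverable case: invoke (i)). For part (i) the engine of the contradiction is the same as in the paper --- the defining property of $v_N^*$, applied to $w=v(\{x,\ell\})$ for a leaf $\ell$ that is directed-reachable from $\rho_N$ without crossing $v_N^*$, is used to reroute $\rho_N\to w\to x$ and contradict the fact that $v_N^*=v(\{x,y\})$ lies on every $\rho_N$-to-$x$ path --- but your packaging is different and in one respect more complete. The paper argues by contraposition from ``$v_N^*$ is not a cut vertex'' and immediately asserts that some leaf is then reachable from $\rho_N$ by a \emph{directed} path missing $v_N^*$; since failing to be a cut vertex only gives \emph{undirected} connectivity of $\underline{N}\setminus v_N^*$, that assertion is precisely the nontrivial step, and it is what your descendant/non-descendant partition together with the exclusion of bypass arcs actually establishes. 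The same partition also delivers, for free, the claim needed in the forward direction of (ii) (that the component of $\rho_N$ in $\underline{N}\setminus v_N^*$ contains no leaf), which the paper deduces from ``cut vertex'' via another implicit appeal to the same structure. So: same key lemma and same contradiction, but your explicit decomposition into $D$ and $U$ supplies the bridge from undirected connectivity to directed reachability that the paper leaves tacit, at the modest cost of the extra case analysis ruling out an arc $(a,b)$ with $a\notin D$ and $b\in D\setminus\{v_N^*\}$.
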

\begin{proof}
(i) Note first that $\rho_N\not=v^*_N$ as $N$ is not recoverable.
Let $x,y\in X$ distinct such that $v_N^*=v(\{x,y\})$,
and assume for contradiction that  $v^*_N$ is not
a cut vertex of $N$. Then there
must exist some leaf $l\in L(N)$ of $N$ that is reachable from
$\rho_N$ without crossing $v^*_N$. 
Hence, there exists some $z\in\{x,y\}$ such that $v_N^*\not=v(\{l,z\})$
and $v_N^*$ is reachable from $v(\{l,z\})$; a contradiction. Thus, 
$v_N^*$ is a cut vertex of $N$.

(ii) We prove the contrapositive of the statement i.\,e.\,we show that
$N$ is not recoverable if and only if $v^*_N$ is a separating vertex of $N$.
Suppose $\{x,y\}\in{X\choose 2}$ such that
$v^*_N=v(\{x,y\})$. Assume first 
that $N$ is not recoverable. Then
$\rho_N\not=v^*_N$ and, by (i),
$v^*_N$ is a cut vertex of $N$. Hence, for every leaf $l\in L(N)$ of $N$,
every directed path from $\rho_N$ to $l$ must cross $v^*_N$. 
Let $K_{\rho_N}$ denote
the connected component of $\underline{N}\backslash v^*_N$ 
that contains $\rho_N$ in its vertex set. 
Then $V(K_{\rho_N})\cap L(N)=\emptyset$. Thus $v^*_N$ 
is a separating vertex of $N$.
 
Conversely, suppose that $v^*_N$ is a separating vertex of $N$.
Then $v^*_N$ is a cut vertex of $N$ and so
every directed path from $\rho_N$ to a leaf $z\in L(N)$ of $N$ 
must cross $v^*_N$. If $N$ were recoverable then
$\rho_N =v^*_N$ would follow, implying that every vertex in $N$
must lie on a directed path from $v^*_N$ to a leaf of $N$. But then
$V(K)\cap L(N)\not=\emptyset$ for 
every connected component $K$ 
in $\underline{N}\backslash v^*_N$;
a contradiction.
Thus, $N$ cannot be recoverable. 
\end{proof}

It immediately follows that 1-nested networks are always
recoverable:

\begin{corollary}\label{green-cutvertex}
Suppose $N$ is a phylogenetic network on $X$, $|X|\geq 3$. 
If $N$ is 1-nested, then $N$ is recoverable.
\end{corollary}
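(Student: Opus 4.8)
The plan is to prove the contrapositive, using Proposition~\ref{green-cutvertex1}(ii): I will show that if $N$ is a 1-nested network on $X$ with $|X|\geq 3$, then $v^*_N$ cannot be a separating vertex of $N$. Recall that $v$ being a separating vertex means that deleting $v$ from $\underline{N}$ produces a connected component $K$ with $V(K)\cap L(N)=\emptyset$. So the heart of the argument is a structural fact about 1-nested networks: removing any vertex leaves every resulting component attached to at least one leaf.

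First I would set up the argument by assuming, for contradiction, that $v^*_N$ is a separating vertex, so there is a component $K$ of $\underline{N}\backslash v^*_N$ with no leaves of $N$. Since $N$ is an rDAG with a single root and every leaf lies below $v^*_N$ (because $v^*_N$ being separating forces it to be a cut vertex, so all root-to-leaf paths cross $v^*_N$ by the reasoning in the proof of Proposition~\ref{green-cutvertex1}), the leafless component $K$ must sit \emph{strictly above} $v^*_N$ in the partial order $<_N$, i.e.\ it is the component containing $\rho_N$. The key observation is then that $v^*_N$ has indegree at least $2$: if it had indegree $1$, its unique in-arc would be a cut arc and $v^*_N$ would be a tree vertex or the root, but then removing $v^*_N$ could not separate off a nonempty leafless piece unless that piece reduces to a single path back to $\rho_N$, and I would need to rule that out. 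More cleanly, I would argue that because $v^*_N = v(\{x,y\})$ for some distinct $x,y\in X$, there are (at least) two internally disjoint directed paths from $v^*_N$ down to the leaves realizing $x$ and $y$; and if $K$ is leafless and lies above $v^*_N$, then the structure forces a cycle in $\underline{N}$ passing through $v^*_N$ together with vertices of $K$.

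The main step, and the one I expect to be the real obstacle, is exploiting the 1-nested hypothesis to derive the contradiction from the existence of such a leafless component above $v^*_N$. The idea is that if $\rho_N \neq v^*_N$ and everything below $v^*_N$ absorbs all the leaves, then the portion of $N$ strictly above $v^*_N$ must reconverge at $v^*_N$ — otherwise some vertex above would itself dominate a pair of leaves, contradicting the minimality built into the definition of $v^*_N$ (namely that $v^*_N$ is the unique vertex with $v^*_N = v(\{x,y\})$ and every other $v(\{u,v\})$ either equals $v^*_N$ or is reachable from it). This reconvergence creates two distinct directed paths from $\rho_N$ (or some higher branch vertex) to $v^*_N$, hence a cycle $C$ in $\underline{N}$ with split vertex above and end vertex $v^*_N$. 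I would then produce a \emph{second} cycle — one realizing the hybrid or reticulate structure that makes $v^*_N$ the lowest stable ancestor of two leaves — and show these two cycles share more than one vertex (they overlap along $v^*_N$ and at least one adjacent vertex), violating the 1-nested condition that any two cycles meet in at most one vertex.

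The plan is therefore to reduce everything to a cycle-overlap contradiction; the bookkeeping obstacle will be verifying that the two cycles I construct genuinely share at least two vertices rather than meeting only at $v^*_N$, which requires a careful look at the arcs incident to $v^*_N$ on both the "above" and "below" sides. Once the contradiction is in hand, $v^*_N$ is not a separating vertex, so by Proposition~\ref{green-cutvertex1}(ii) the network $N$ is recoverable, completing the proof.
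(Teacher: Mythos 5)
Your reduction via Proposition~\ref{green-cutvertex1}(ii) is a legitimate framing, and you correctly locate the leafless component as the one containing $\rho_N$ and extract one cycle above $v^*_N$. But the step that is supposed to deliver the contradiction does not work, for two reasons. First, the ``second cycle --- one realizing the hybrid or reticulate structure that makes $v^*_N$ the lowest stable ancestor of two leaves'' need not exist: $v^*_N=v(\{x,y\})$ only says that $v^*_N$ is the last vertex lying on all root-to-$x$ and root-to-$y$ paths, and this is perfectly compatible with the part of $N$ below $v^*_N$ being a tree (an ordinary branch point), with no reticulation below $v^*_N$ at all. Second, even when such a cycle does exist, it lies below $v^*_N$ (its split vertex is at or below $v^*_N$), whereas your first cycle lies above $v^*_N$ (its end vertex is $v^*_N$); the arcs of the first cycle at $v^*_N$ are in-arcs and those of the second are out-arcs, so the two cycles meet in exactly the one vertex $v^*_N$ --- and two cycles sharing a single vertex is explicitly allowed in a 1-nested network. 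So no violation of 1-nestedness is produced.

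The contradiction has to be found entirely \emph{above} $v^*_N$, and it hinges on a count of \emph{three} rather than two. This is what the paper does: once $v^*_N$ is a cut vertex separating $\rho_N$ from every leaf, every directed path from $\rho_N$ to a leaf crosses $v^*_N$; since $outdegree(\rho_N)\ge 2$, there are no multiple arcs, and every vertex strictly between $\rho_N$ and $v^*_N$ is a non-leaf (hence a tree vertex with outdegree at least $2$ or a hybrid vertex with indegree at least $2$), one obtains at least three distinct directed paths from $\rho_N$ to $v^*_N$. Three such paths force two cycles in $\underline{N}$ that intersect in at least two vertices, contradicting 1-nestedness. Your argument only ever certifies two paths from $\rho_N$ to $v^*_N$ (equivalently $indegree(v^*_N)\ge 2$) and hence a single cycle above, which by itself is consistent with $N$ being 1-nested; the missing idea is the degree bookkeeping that upgrades two paths to three.
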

\begin{proof}
Suppose for contradiction that there exists
a 1-nested network $N$ on $X$ that is not recoverable, that is, 
$\rho_N\not=v^*_N$.
Then, by Proposition~\ref{green-cutvertex1}(i), $v^*_N$ is
a cut vertex of $N$. Hence,
for every leaf $l\in L(N)$ of $N$, every directed path from
$\rho_N\not = v^*_N$ to $l$ must cross $v^*_N$. Since
$outdegree(\rho_N)\geq 2$ and $N$ cannot have multiple
arcs, it follows that there exist  (at least) 
3 distinct directed paths in $N$  from $\rho_N$
to $v^*_N$. But then there must exist two 
cycles in $\underline N$ which
intersect in at least 2 vertices; a contradiction.
\end{proof}

\section{1-nested trinets imply 1-nested networks} \label{1-nested-implies}

In the last section, we proved that if 
$N$ is a 1-nested phylogenetic network on $X$, $|X|\geq 3$,
then $N$ is recoverable. We shall now prove that
if all of the trinets displayed by a recoverable network 
are 1-nested, then the network is 1-nested (Theorem~\ref{green}).

To this end, 
suppose that $N$ is a phylogenetic network on $X$, $|X|\geq 3$,
and that $C$ is a cycle of $\underline{N}$. Put
$$
Z(C)=\{v\in C\,:\, \mbox{there exist }
\{a, a'\}\in{ A(C)\choose 2} \mbox{ with } tail(a)=tail(a')=v 
\}.
$$
Clearly, $Z(C)\not=\emptyset$.

Now, suppose $l\in L(N)$ is a leaf of $N$ that is reachable
from a hybrid vertex of $N$. We denote by $p(l)$ the
number of distinct 
directed paths in $N$ from $\rho_N$ to $l$. Clearly $p(l)\geq 2$.
Moreover, we denote
by $w(l)$ the unique vertex of $N$ distinct from $l$ 
that simultaneously lies on every directed path from $\rho_N$ to $l$ such that 
(i) $w(l)$ is a hybrid vertex of $N$, and
(ii) there is a unique directed path from $w(l)$ to $l$ 
such that every interior
vertex of $N$ on this path is a tree vertex of $N$. To illustrate
these definitions, consider the network $N$ on $\{x,y,z\}$ depicted
in Fig.~\ref{not-recoverable}. Then $w(y)$ is the
unique hybridization vertex of $N$ and $p(y)=3$. 

We now prove some useful, but somewhat technical, results
concerning the set $Z(C)$. 

\begin{proposition}\label{cycles}
Suppose $N$ is a recoverable phylogenetic network on $X$, $|X|\geq 3$,
such that every trinet in $Tr(N)$ is isomorphic to one of the
fourteen trinets on $\{x,y,z\}$ depicted in Fig.~\ref{trinets}.
Then $|Z(C)|= 1$, for all cycles $C$ in $\underline{N}$.
\end{proposition}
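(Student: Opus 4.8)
The set $Z(C)$ is exactly the set of \emph{local sources} of $C$: those vertices of $C$ both of whose incident cycle-arcs are directed away from them. Since $N$ is acyclic, as one traverses $C$ the local sources and the \emph{local sinks} (vertices both of whose incident cycle-arcs point into them) must alternate, so the number of local sinks equals $|Z(C)|$; moreover every local sink is a hybrid vertex of $N$, since it has indegree at least $2$. As already observed, $Z(C)\neq\emptyset$, so the plan is to assume $|Z(C)|\ge 2$ and derive a contradiction by exhibiting a displayed trinet that is not 1-nested (hence, by Lemma~\ref{1-nested-implies trinets}, not isomorphic to any of the fourteen trinets of Fig.~\ref{trinets}).

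First I would fix two local sources $s_1,s_2\in Z(C)$ that are \emph{consecutive} along $C$, together with the local sink $w_1$ lying between them, so that $C$ contains descending paths $s_1\rightsquigarrow w_1$ and $s_2\rightsquigarrow w_1$. I also record the remaining cycle-neighbours $w_0$ of $s_1$ and $w_2$ of $s_2$ along the other two arcs of $C$, noting that $w_0=w_2$ precisely when $|Z(C)|=2$. The goal is to have $s_1$ and $s_2$ survive as split vertices of a cycle in a suitably displayed trinet. To arrange this I would, for each distinct sink among $w_0,w_1,w_2$, choose a leaf hanging cleanly below it; concretely, using the notion captured by $w(\cdot)$, I would pick leaves reachable from these sinks along directed paths all of whose interior vertices are tree vertices. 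This both guarantees that the chosen leaves are pairwise distinct and ensures that each chosen leaf is dominated by its sink, i.e.\ that every directed path to it passes through that sink. This produces a set $Y\in{X\choose 3}$, say with $a$ below $w_1$, $b$ below $w_0$, and $c$ below $w_2$ (reusing a leaf and adding a third one elsewhere in the degenerate case $w_0=w_2$).

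The crux is to verify that in the trinet $N_Y$ of Definition~\ref{tripdisp} both $s_1$ and $s_2$ remain split vertices of a common cycle. Here I would first argue that $v(Y)$ is an ancestor of both $s_1$ and $s_2$: since every path to $a$ passes through $w_1$ and reaches $w_1$ through either $s_1$ or $s_2$, and similarly for $b$ and $c$, the last common vertex $v(Y)$ on all paths to $a,b,c$ must lie above $s_1$ and $s_2$; recoverability (via Proposition~\ref{green-cutvertex1}) is what prevents this structure from sitting in an undetectable region of $N$ above $v_N^*$ and so guarantees that the display records it faithfully. Consequently the union $\tilde N$ of all directed paths from $v(Y)$ to $a,b,c$ contains both cycle-arcs leaving $s_1$ (toward $w_1\to a$ and toward $w_0\to b$) and both cycle-arcs leaving $s_2$ (toward $w_1\to a$ and toward $w_2\to c$). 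As the two branches of each $s_i$ terminate at distinct retained leaves, neither branch is discarded and neither $s_i$ is suppressed; hence, after suppressing the vertices of in- and outdegree one, $s_1$ and $s_2$ persist in $N_Y$ as two distinct split vertices of a single cycle.

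Finally I would observe that a trinet, being a phylogenetic network with a single root, cannot contain a cycle with two split vertices and still be 1-nested: if $s_1,s_2$ are split vertices of a cycle $D$ in $N_Y$ with $s_1\rightsquigarrow w_1$ and $s_2\rightsquigarrow w_1$ along $D$, then two directed paths from the root of $N_Y$ down to $s_1$ and $s_2$ close up with these descending paths into a second cycle meeting $D$ in at least the two vertices $s_1,s_2$, so $N_Y$ is not 1-nested. This contradicts the hypothesis that every trinet in $Tr(N)$ is isomorphic to one of the fourteen 1-nested trinets, and hence $|Z(C)|\le 1$; combined with $Z(C)\neq\emptyset$ this gives $|Z(C)|=1$. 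I expect the main obstacle to be the third paragraph: ensuring that the selected leaves are genuinely distinct and, above all, that the suppression operations never lower the outdegree of $s_1$ or $s_2$ — that is, that the $K_{2,2}$-like pattern on $\{s_1,s_2\}$ and the relevant sinks is transmitted intact to $N_Y$. The two regimes $w_0=w_2$ and $w_0\neq w_2$ will probably need to be handled separately.
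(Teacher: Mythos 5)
Your overall strategy is the paper's: assume $|Z(C)|\ge 2$, pick two consecutive local sources of $C$, and exhibit a displayed trinet in which both survive as split vertices of a cycle, so that the trinet contains two cycles meeting in a path of positive length and hence cannot be one of the fourteen trinets of Fig.~\ref{trinets}. The final paragraph (two split vertices on one cycle force two cycles with a common subpath) and the reduction via Lemma~\ref{1-nested-implies trinets} are fine. The gap is exactly where you suspect it is, and your proposed fix does not close it. First, for a given local sink $h$ of $C$ there need not exist any leaf reachable from $h$ by a directed path all of whose interior vertices are tree vertices: every outgoing path from $h$ may run into another hybrid vertex first, so a leaf ``hanging cleanly below'' $h$ (one with $w(l)=h$, hence dominated by $h$) may simply not exist. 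Second, and more seriously, even if you drop the domination requirement and just take arbitrary leaves reachable from the sinks, these leaves can coincide: distinct sinks of $C$ can all lead to the same one or two leaves of $N$. Your claim that the clean-path choice ``guarantees that the chosen leaves are pairwise distinct'' is therefore unsupported, and the degenerate regimes are not a corner case one can wave at with ``adding a third one elsewhere'' --- they are where all the work is.

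Concretely, the paper's proof splits on how many \emph{distinct} leaves are reachable from the sinks $h_1,\dots,h_m$ of $C$: with three or more, a trinet on the first three distinct ones in the cyclic order retains enough of $Z(C)$; with exactly two, an arbitrary third leaf (which exists since $|X|\ge 3$) still yields a trinet containing $v(\{z_1,\dots,z_m\})$ and hence all the $z_j$; and with only one leaf $l_1$, one must first show that $w(l_1)$ is \emph{not a cut vertex} of $N$ --- this is the step that genuinely uses recoverability, via the argument of Proposition~\ref{green-cutvertex1} --- in order to manufacture a second leaf reachable from $\rho_N$ without crossing $w(l_1)$ and then fall back to the two-leaf case. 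In your write-up recoverability is invoked only to argue that $v(Y)$ lies above $s_1$ and $s_2$, which is not where it is needed (that containment follows from Definition~\ref{tripdisp} once three suitable distinct leaves are in hand); what recoverability actually buys is an escape from the degenerate case in which the entire cycle $C$ feeds into a single leaf. To repair the proposal you would need to replace the leaf-selection paragraph by this case analysis, or by some other device that provably produces a $3$-set $Y$ whose displayed trinet retains at least two elements of $Z(C)$.
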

\begin{proof} 
Suppose for contradiction that $\underline{N}$ 
contains a cycle $C$ with
$m:=|Z(C)|\geq 2$. Put $Z(C)=\{z_1,\dots, z_m\}$. Since
$C$ is a cycle in $\underline{N}$ there must exist distinct vertices
$h_i\in C$, $1\leq i\leq m$, such that, for all $1\leq i\leq m$,
two of the incoming arcs of $h_i$ are contained in $A(C)$ and $h_i$ can be
reached from $z_i$ and from $z_{i+1}$, $1 \le i \le m$, where
we define $z_{m+1}:=z_1$. 
Moreover for each such
vertex $h_i$ there must exist a leaf $l_i\in L(N)$ of $N$ that 
is reachable from $h_i$. Note that some of the leaves $l_i$ might be the
same (see Fig.~\ref{reticulation-set} for a representation of the 
generic situation in which all leaves $l_i$, $1\leq i\leq m$, are
distinct).
\begin{figure}[t]
\centering
\includegraphics[scale=0.5]{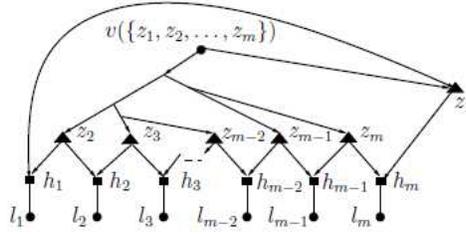}
\caption{The situation considered in the proof of Proposition~\ref{cycles}. 
The vertices in $C$ which have two of their 
incoming (outgoing) arcs contained
in $A(C)$ are marked with squares (triangles). The leaves of $N$
plus the vertex $v(\{z_1,\dots,z_m\}$ are marked by dots. For clarity
all other vertices are not marked.
The directed lines represent directed paths rather than arcs.}
\label{reticulation-set}
\end{figure}

Choose some $i\in\{1,\dots,m\}$, say $i=1$, and let 
$\sigma$ be the ordering $l_1, l_2, \dots,l_m$
of the leaves $l_j$, $1\leq j\leq m$ induced by $C$ via
the vertices $h_i$, $1\leq i\leq m$. If there exist at least
three distinct leaves in that ordering, then let 
$l_{i_1},l_{i_2},l_{i_3}$ denote the first three distinct 
leaves in $\sigma$. Note that $l_1=l_{i_1}$ and each of $l_{i_1}$
 $l_{i_2}$, and $l_{i_3}$ is reachable from 
$v(\{z_1,z_2,\dots, z_p\})\in V(N)$, 
where $p\in \{1,\dots,m\}$  is such that for all $i_3\leq q\leq p$ we have 
$l_q=l_{i_3}$. But then $p\geq 3$ and the 
trinet $N'$ on 
$\{l_{i_1}, l_{i_2},l_{i_3}\}$ displayed by $N$
contains $v(\{z_2,\dots, z_p\})$
in its vertex set if $p\not=m$ and, otherwise, the
vertex $v(\{z_1,z_2,\dots, z_p=z_m\})$. Hence, if $p\not=m$ then 
$z_j\in V(N')$, $2\leq j\leq p-1$, and otherwise, $z_j\in V(N')$ with
$j=1,\dots,m$. Consequently, $\underline{N'}$ contains 
two cycles that intersect in a path of length 1 or more
in each case. Since, by construction, each cycle is the union of 
two directed paths in $N$ that have the same start and end vertex 
this implies
that $N'$ is not of the specified form, a contradiction.

Now, if there exist just two leaves $l_{i_1}$ and $l_{i_2}$ 
in $\sigma$ that are distinct,
then choose some $l\in L(N)-\{l_{i_1},l_{i_2}\}$, which must exist
as $|X|\geq 3$. Since each of $l_{i_1}$ 
and $l_{i_2}$ is reachable from 
$v(\{z_1,z_2,\dots, z_m\})\in V(N)$ it follows that 
$v(\{z_1,z_2,\dots, z_m\})$
is a vertex in the trinet $N'$ on $\{l, l_{i_1},l_{i_2}\}$ displayed by $N$.
But then $z_j\in V(N')$, $1\leq j\leq m$ which implies that
$\underline{N'}$ contains two cycles that intersect in a path of
length at least 1. As before, this yields a contradiction.

So suppose that $l_i=l_j$ for all $i,j\in\{1,\dots, m\}$. Let 
$L_{w(l_1)}\subseteq L(N)$ denote the set of leaves of $N$ that
are reachable from $w(l_1)$.  We claim that $w(l_1)$ 
is not a cut vertex  of $N$.
Suppose for contradiction that $w(l_1)$ is 
a cut vertex of $N$. Then, since $N$ is recoverable, there must exist a leaf 
$l\in L(N)-L_{w(l_1)}$ that is reachable from $\rho_N$ without
 crossing $w(l_1)$.
Choose some $l'\in L(N)-\{l_1,l\}$, which must exist as $|X|\geq 3$. 
Since $l_1$ is reachable from each
of $z_j$, $1\leq j\leq m$, $v(\{z_1,z_2,\dots, z_m\})\in V(N)$ 
must be a vertex in the trinet $N'$ on $\{l,l',l_1\}$
displayed by $N$. But then $z_i\in V(N')$, $1\leq i\leq m$,
and so we obtain a contradiction as before.
Thus, $w(l_1)$ cannot be a cut vertex of $N$, as claimed.

Thus, there must exist some leaf $l\in L_{w(l_1)}$ that is reachable from
$\rho_N$ without crossing $w(l_1)$. But then 
$l_1\not=l$, by the definition of $w(l_1)$. 
Arguments similar to the ones used in the 
previous case can be now used to 
obtain a final contradiction. Thus, $|Z(C)|=1$
must hold for every cycle $C$ of $\underline{N}$.
\end{proof}

To establish Theorem~\ref{green} we will
use one further result that follows from the last proposition.
Suppose $N$ is a phylogenetic network on $X$, $|X|\geq 3$, and $C$
is a cycle in $\underline{N}$
with $|Z(C)|=1$. Then we denote the unique vertex in $C$ that 
has two of its incoming arcs contained in $A(C)$ by $h_{C}$. 

\begin{corollary}\label{not-cut-vertex}
Let $N$ be a recoverable phylogenetic network on $X$, $|X|\geq 3$,
such that every trinet in $Tr(N)$ is isomorphic to one of the
fourteen trinets on $\{x,y,z\}$ depicted in Fig.~\ref{trinets}.
Let $C_1$ and $C_2$ denote two distinct cycles of $\underline{N}$ for which 
$A(C_1)\cap A(C_2)\not=\emptyset$ holds,
and let $l\in L(N)$ denote a leaf of $N$ that is reachable from 
both $h_{C_1}$ and $h_{C_2}$. Then $w(l)$ is not a cut vertex of $N$.
\end{corollary}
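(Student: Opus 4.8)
The plan is to argue by contradiction, assuming $w(l)$ is a cut vertex of $N$, and then to use the recoverability of $N$ together with Proposition~\ref{cycles} to produce a trinet displayed by $N$ that violates the 1-nested trinet classification of Fig.~\ref{trinets}. Since $C_1$ and $C_2$ share an arc but (by 1-nestedness, which we are not assuming, so we must be careful) may share more, the key structural input is that $l$ is reachable from both $h_{C_1}$ and $h_{C_2}$, the unique vertices in $Z(C_1)$ and $Z(C_2)$ respectively; by Proposition~\ref{cycles} these singletons are well-defined, so the notation $h_{C_i}$ makes sense.

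First I would unpack the assumption that $w(l)$ is a cut vertex. By the definition of $w(l)$, it is the unique hybrid vertex lying on every directed path from $\rho_N$ to $l$ that has a tree-vertex-only path down to $l$. If $w(l)$ is a cut vertex, then every directed path from $\rho_N$ to any leaf in the ``downstream'' component must cross $w(l)$. Here is where recoverability enters, exactly as in the last case of the proof of Proposition~\ref{cycles}: if $w(l)$ separated off a component containing only some of the leaves, recoverability ($\rho_N = v^*_N$) forces the existence of a leaf reachable from $\rho_N$ without crossing $w(l)$. The step I would carry out is to extract such an ``escaping'' leaf $l'$, i.e.\ a leaf $l' \in L(N)$ reachable from $\rho_N$ without crossing $w(l)$.

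Next I would locate the two shared reticulation structures relative to $w(l)$. Because $l$ is reachable from both $h_{C_1}$ and $h_{C_2}$, and $w(l)$ is the hybrid vertex common to all $\rho_N$-to-$l$ paths with the unique tree-path property, both $h_{C_1}$ and $h_{C_2}$ must lie on (or above) $w(l)$ along the paths to $l$. The intuition is that $C_1$ and $C_2$, sharing an arc, together witness two cycles that both pass through vertices feeding into $l$, and that $w(l)$ being a cut vertex forces these cycles to collapse into a configuration that, when restricted to a well-chosen trinet, reappears as two cycles intersecting in a path of length $\geq 1$. Concretely, I would pick a third leaf (using $|X|\ge 3$) and form the trinet $N'$ on $\{l, l', l''\}$ displayed by $N$, then show $v$-vertices associated with $C_1, C_2$ survive into $V(N')$, so that $\underline{N'}$ contains two cycles sharing more than one vertex --- contradicting that every trinet in $Tr(N)$ is 1-nested (Lemma~\ref{1-nested-implies trinets}).

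The hard part will be the bookkeeping that guarantees the relevant vertices of $C_1$ and $C_2$ genuinely persist in the displayed trinet $N'$ after the suppression operations of Definition~\ref{tripdisp}, rather than being suppressed away; this is the same subtlety that drove the case analysis in Proposition~\ref{cycles}, and the cleanest route is probably to choose the three leaves of $N'$ so that the split/end vertices and the shared arc of $C_1 \cap C_2$ each separate two distinct elements of $\{l, l', l''\}$, forcing them to be retained as genuine branch points. I expect the argument to reduce, after choosing $l'$ and a third leaf appropriately, to invoking Proposition~\ref{cycles} or its proof technique essentially verbatim, so the novelty is mainly in verifying that the two-shared-arc hypothesis on $C_1, C_2$ translates into a single vertex $w(l)$ whose cut-vertex status is incompatible with recoverability.
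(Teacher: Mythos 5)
Your proposal is correct and follows essentially the same route as the paper's proof: assume for contradiction that $w(l)$ is a cut vertex, use recoverability to extract a leaf $l'$ reachable from $\rho_N$ without crossing $w(l)$, choose a third leaf $l''$, and show that the displayed trinet on $\{l,l',l''\}$ contains two cycles meeting in more than one vertex, contradicting the classification of Fig.~\ref{trinets}. The bookkeeping step you defer is dispatched in the paper in one line: with $z_i$ the unique vertex of $Z(C_i)$ (via Proposition~\ref{cycles}), the trinet on $\{l,l',l''\}$ contains $v(\{z_1,z_2\})$ and hence every arc of $A(C_1)\cup A(C_2)$, so the shared arc of $C_1$ and $C_2$ survives into the trinet directly.
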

\begin{proof} 
Suppose for contradiction that this is not the case, that is, there exists
a recoverable phylogenetic network $N$ on $X$, two distinct
cycles $C_1$ and $C_2$ in $\underline{N}$ 
with $A(C_1)\cap A(C_2)\not=\emptyset$,
and a leaf $l\in L(N)$ of $N$ that is reachable from $h_{C_1}$ and from
$h_{C_2}$ but that $w(l)$ is a cut vertex of $N$. Since $N$ is recoverable,
there must exist a leaf $l'\in L(N)-\{l\}$ of $N$ that 
is reachable from $\rho_N$ without crossing $w(l)$. 

Now let $z_i\in V(N)$ denote the unique vertex in $Z(C_i)$ $i=1,2$. 
Note that $z_1=z_2$ might hold.
Since $l$ is clearly also reachable from both $z_1$ and $z_2$, there
must exist a directed path from $\rho_N$ to $v(\{z_1,z_2\})$ that 
crosses $v(\{l,l'\})$. Choose some $l''\in L(N)-\{l,l'\}$
which must exist as $|X|\geq 3$. 
Then the trinet $N'$ on $\{l,l',l''\}$ displayed by $N$ 
contains the vertex $v(\{z_1,z_2\})$ and thus
every arc in  $A(C_1)\cup A(C_2)$.
Since $A(C_1)\cap A(C_2)\not =\emptyset$ it follows that
 $N'$ is not of the specified form which is impossible.
\end{proof}

We now prove the main result of this section:

\begin{theorem}\label{green}
Suppose that $N$ is a recoverable phylogenetic network on $X$, $|X|\geq 3$.
Then $N$ is 1-nested if and only if
every trinet in $Tr(N)$ is isomorphic to one of the 
fourteen trinets depicted in Fig.~\ref{trinets}. 
\end{theorem}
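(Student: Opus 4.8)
The statement is an ``if and only if'', and by Lemma~\ref{1-nested-implies trinets} the forward direction is immediate: if $N$ is 1-nested then every trinet in $Tr(N)$ is isomorphic to one of the fourteen trinets in Fig.~\ref{trinets}. So the work is entirely in the converse. I would therefore assume that $N$ is a recoverable phylogenetic network on $X$, $|X|\geq 3$, all of whose displayed trinets are isomorphic to one of the fourteen, and aim to show $N$ is 1-nested, i.e.\ that every pair of cycles in $\underline{N}$ intersect in at most one vertex. The proof should be by contradiction: suppose some pair of cycles $C_1, C_2$ in $\underline{N}$ share at least two vertices.

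\textbf{Key steps.} First I would reduce to the case where the two offending cycles actually share an \emph{arc}, not merely two vertices. If $C_1$ and $C_2$ meet in two or more vertices, then either they share an arc, or the shared vertices together with the arc-disjoint paths between them already force a configuration exhibiting two cycles sharing more than a vertex inside some displayed trinet; in the genuinely arc-sharing case I can invoke the machinery already built. By Proposition~\ref{cycles}, each cycle $C$ has $|Z(C)|=1$, so each has a well-defined vertex $h_C$ carrying its two in-cycle incoming arcs. Since $C_1$ and $C_2$ share an arc, there is a leaf $l$ reachable from both $h_{C_1}$ and $h_{C_2}$; one must argue such a leaf exists using the fact that hybrid vertices have outdegree at least $1$ and every path continues down to a leaf. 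Corollary~\ref{not-cut-vertex} then tells me $w(l)$ is \emph{not} a cut vertex of $N$. This is the lever: because $w(l)$ is not a cut vertex and $N$ is recoverable, there is a leaf reachable from the relevant hybrid structure via a path avoiding $w(l)$, which I can combine with two further leaves (available since $|X|\geq 3$) to build a displayed trinet $N'$ whose defining vertex is $v$ of the relevant $Z$-set. That trinet then contains all the arcs of $A(C_1)\cup A(C_2)$, so $\underline{N'}$ inherits two cycles meeting in a path of length at least one, contradicting that $N'$ is one of the fourteen.

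\textbf{Where the difficulty lies.} The main obstacle is the bookkeeping around which leaf to pick and verifying that the constructed three-leaf set $Y$ genuinely pulls the entire shared-arc configuration into $N_Y$ under Definition~\ref{tripdisp}. The delicate point is that $v(Y)$ must sit at or above $v(\{z_1,z_2\})$ (where $z_i$ is the unique vertex of $Z(C_i)$) so that the suppression process in Definition~\ref{tripdisp} does not collapse the two cycles before they can be detected; this is exactly why recoverability and the non-cut-vertex conclusion of Corollary~\ref{not-cut-vertex} are needed, and why I must be careful that the third leaf $l''$ is chosen so that no path to it short-circuits the structure. Once the correct $Y$ is fixed, the contradiction is essentially the same arc-survival argument as in Corollary~\ref{not-cut-vertex}, so the real content is organizing the case analysis (whether the two cycles share one arc or several, whether $z_1=z_2$, and whether the common leaf collapses) so that every case routes through either Proposition~\ref{cycles} or Corollary~\ref{not-cut-vertex}. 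I would close by noting that, having ruled out any pair of cycles sharing two or more vertices, $N$ satisfies the definition of 1-nested, completing the converse.
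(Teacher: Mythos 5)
Your overall strategy is the paper's: handle the forward direction by Lemma~\ref{1-nested-implies trinets}, then argue the converse by contradiction, pick an offending pair of cycles sharing an arc $(x_1,x_2)$ at the bottom of their common path, invoke Proposition~\ref{cycles} to get the unique vertices $z_i\in Z(C_i)$ and the hybrid vertices $h_{C_i}$, and exhibit a displayed trinet containing $v(\{z_1,z_2\})$ whose underlying graph has two cycles meeting in more than one vertex. However, there is one concrete gap: you assert that ``since $C_1$ and $C_2$ share an arc, there is a leaf $l$ reachable from both $h_{C_1}$ and $h_{C_2}$,'' justified only by the observation that every vertex reaches some leaf. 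That observation gives you a leaf $l_1$ reachable from $h_{C_1}$ and a leaf $l_2$ reachable from $h_{C_2}$, but not a common one: if $h_{C_1}$ and $h_{C_2}$ are distinct, incomparable hybrid vertices (two cycles can share an arc near their split vertices and close up at unrelated hybrid vertices with disjoint sets of descendant leaves), no such common leaf need exist. Since Corollary~\ref{not-cut-vertex} takes the existence of such a common leaf as a hypothesis, your plan funnels every case through a lemma whose hypothesis may fail.

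The paper resolves exactly this by a case split you do not supply. After classifying the shared-arc configuration into three generic cases, it notes that a common leaf can be forced only in two of them (where $x_2$ coincides with a hybrid vertex); in the remaining case it first assumes $l_1=l_2$ and applies Corollary~\ref{not-cut-vertex} to produce a second leaf $l$ below $w(l_1)$ reachable while avoiding $w(l_1)$, derives a contradiction, and concludes $l_1\neq l_2$ \emph{must} hold --- and then runs a separate, direct argument on the trinet displayed on $\{l_1,l_2,l\}$ for a third leaf $l$, which does not pass through Corollary~\ref{not-cut-vertex} at all. Your remark that ``every case routes through either Proposition~\ref{cycles} or Corollary~\ref{not-cut-vertex}'' is therefore too optimistic: the $l_1\neq l_2$ branch needs its own trinet argument, and your plan as written would stall precisely there. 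The rest of your outline (the extremal choice of the shared arc, the role of recoverability, and the arc-survival argument showing the trinet on the chosen triple retains two cycles meeting in the edge $\{x_1,x_2\}$) matches the paper.
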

\begin{proof} 
If $N$ is 1-nested then, by Lemma~\ref{1-nested-implies trinets},
the trinets in $Tr(N)$ are of the specified form.

Conversely, suppose that the trinets in $Tr(N)$ are of
the specified form. 
Assume for contradiction that $N$ is not $1$-nested. Then there 
must exist two cycles $C_1$ and $C_2$ in $\underline{N}$ which intersect in 
more than one vertex. Moreover, amongst all such pairs of cycles, there  
must exist a pair $C_1$ and $C_2$ for which the following holds:
There is a path $P$  with $V(P)\subseteq C_1\cap C_2$ which 
has an end vertex $x_2\in V(P)$ such that the edge $\{x_1,x_2\}\in E(P)$
is the arc $(x_1,x_2)$ in $ A(N)$ and 
$\{y,x_2\}\not\in E(C_1)\cap E(C_2)$, for all 
$y\in (C_1\cap C_2)-\{x_1,x_2\}$.
Choose some $z_i\in Z(C_i)$, $i=1,2$ and note that, by
Proposition~\ref{cycles}, $|Z(C_i)|=1$. However note that
$z_1=z_2$ might hold.

Let $l_i\in L(N)$ denote a leaf of $N$ that is reachable from $h_i=h_{C_i}$, 
$i=1,2$. Then one of the three generic cases (a) - (c)
pictured in Fig.~\ref{cases} must hold.
Note that in the case of (b) and (c) we can choose $l_1$ to equal
$l_2$ since in case of (b) we have $x_2=h_2$
and in case of (c) we have $x_2=h_2=h_1$.
\begin{figure}[t] \centering
\includegraphics[scale=0.5]{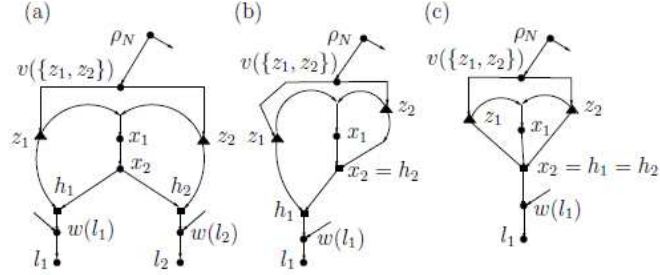}
\caption{The three generic cases considered in 
the proof of Theorem~\ref{green}.
The vertices in $C_i$, $i=1,2$ which have two of their incoming (outgoing) 
arcs contained in $A(C_i)$, $i=1,2$ are marked with squares (triangles). The 
leaves of $N$ plus the vertices $\rho_N$, $x_1$, $x_2$, $v(\{z_1,z_2\})$ 
and $w(l_i)$, $i=1,2$ are marked by dots.
For clarity all other vertices are not marked.
The directed lines represent directed paths rather than arcs.}
\label{cases}
\end{figure}

Suppose first that Case (a) holds. We begin 
by considering the case $l_1=l_2$.
Since $N$ is recoverable, Corollary~\ref{not-cut-vertex} implies that
$w(l_1)$ is a not a cut vertex of
$N$. Let $L_{w(l_1)}\subseteq L(N)$ 
denote the set of leaves of $N$ that are reachable from $w(l_1)$.
Then there must exist a leaf $l\in L_{w(l_1)}$ of $N$ that is reachable 
from $\rho_N$ without crossing $w(l_1)$. By the definition of $w(l_1)$,
$l_1\not =l$. Since $l$ is reachable from
$z_1$ and from $z_2$, there must exist a directed path from
$\rho_N$ to $v(\{z_1,z_2\})$ that crosses $v(\{l_1,l\})$.
Choose some $l'\in L(N)-\{l_1,l\}$, which must exist as
$|X|\geq 3$. Then the trinet $N'$ on $\{l_1,l,l'\}$ 
displayed by $N$ contains the vertex
$v(\{z_1,z_2\})$. Thus, $\underline{N'}$ contains
two cycles that intersect in the 
edge $\{x_1,x_2\}$. Since each cycle is the union of two directed
paths in $N$ that have the same start vertex and the same end vertex,
it follows that $N'$ is not of the specified 
form, a contradiction. Thus $l_1\not=l_2$ must hold.

Since $|X|\geq 3$, we may choose some $l\in L(N)-\{l_1,l_2\}$.
But then similar arguments applied to
the trinet $N'$ on $\{l_1,l_2,l\}$ 
displayed by $N$ yields a contradiction.

Similar arguments can be used to show that 
Case (b) and Case (c) lead to a contradiction.
But this implies that there cannot exist
two distinct cycles of $\underline{N}$
that intersect in more than one vertex. 
Thus, $N$ must be 1-nested.
\end{proof} 

As a corollary we see that if 
all of the trinets displayed by a 
recoverable phylogenetic network are trees
then the network must be a tree.

\begin{corollary}
Suppose $N$ is a recoverable phylogenetic network on $X$, $|X|\geq 3$.
Then $N$ is a phylogenetic tree on $X$ if and only if every trinet
in $Tr(N)$ is isomorphic to either the trinet $T_1(x,y,z)$ or the
trinet $T_2(x,y,z)$ on $\{x,y,z\}$.
\end{corollary}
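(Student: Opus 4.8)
The plan is to prove both directions, with Theorem~\ref{green} carrying the substantive ``if'' direction. For the easy direction, suppose $N$ is a phylogenetic tree on $X$. Then for every $Y\in{X\choose 3}$ the paths in $N$ are unique, so $v(Y)$ is just the most recent common ancestor of $Y$, and the auxiliary network $\tilde N$ of Definition~\ref{tripdisp} is the minimal subtree of $N$ spanning $v(Y)$ and the three leaves of $Y$. Suppressing its degree-2 vertices yields a phylogenetic tree on the $3$-set $Y$, and up to isomorphism there are exactly two such trees, the fan and the resolved triple, namely $T_1$ and $T_2$. Hence every trinet in $Tr(N)$ is isomorphic to $T_1$ or $T_2$.

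For the converse, suppose every trinet in $Tr(N)$ is isomorphic to $T_1$ or $T_2$. Since $T_1$ and $T_2$ are among the fourteen trinets of Fig.~\ref{trinets}, Theorem~\ref{green} applies and $N$ is $1$-nested. The key observation is that $T_1$ and $T_2$ are precisely the $1$-nested trinets that are phylogenetic trees, whereas each of the remaining twelve trinets in Fig.~\ref{trinets} contains a cycle. Thus it suffices to show that $\underline N$ contains no cycle, equivalently that $N$ has no hybrid vertex.

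Suppose for contradiction that $\underline N$ contains a cycle $C$. By Proposition~\ref{cycles}, $|Z(C)|=1$, so $C$ has a unique split vertex $z$ and a unique hybrid $h=h_{C}$, and $C$ is the union of two directed paths from $z$ to $h$. Fix a leaf $l$ reachable from $h$; then $l$ is reachable via both of these paths. The aim is to choose two further leaves $a,b$ so that the displayed trinet $N_{\{l,a,b\}}$ still contains a cycle, and hence is one of the twelve non-tree trinets rather than $T_1$ or $T_2$, a contradiction. Concretely, I would choose $a,b$ so that $v(\{l,a,b\})$ equals $z$ or lies above it (an ancestor of $z$) and so that the union of directed paths from $v(\{l,a,b\})$ to $\{l,a,b\}$ traverses both sides of $C$; then both incoming arcs of $h$ survive the suppression steps of Definition~\ref{tripdisp}, so $N_{\{l,a,b\}}$ retains a cycle. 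A leaf hanging off a side of $C$ serves for one of $a,b$ when such a leaf exists; in the ``bare'' case recoverability is what supplies a usable leaf, since if every leaf were reachable from $h$ then $h$ would dominate every leaf and so $v_N^*$ would lie on or below $h\neq\rho_N$, contradicting recoverability. This furnishes a leaf outside the subtree below $h$, which forces $v(\{l,a,b\})$ to or above $z$ as required.

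The main obstacle is exactly this leaf selection: ensuring the chosen triple \emph{exposes} the cycle in the displayed trinet rather than collapsing it under the suppression operations. The difficulty is sharpest when cycles are nested, because then $z$ need not dominate $h$, and a careless choice of $a,b$ could yield a trinet whose $v(\cdot)$ sits strictly below $z$ and destroys $C$. This is the same phenomenon already handled by the case analysis in the proofs of Proposition~\ref{cycles}, Corollary~\ref{not-cut-vertex}, and Theorem~\ref{green}, and I would reuse that machinery, applying it if necessary to a topmost cycle, to guarantee at least one non-tree trinet in $Tr(N)$ and hence the desired contradiction.
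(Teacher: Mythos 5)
Your overall route is the same as the paper's: the forward direction is the observation that a phylogenetic tree displays only tree trinets, and the converse is Theorem~\ref{green} together with the claim that a recoverable 1-nested network contains a cycle if and only if some displayed trinet does. The paper simply asserts this last claim as a fact and calls the corollary an immediate consequence, so in trying to justify it you are doing more than the paper does; and your identification of the real difficulty --- that both sides of a cycle can collapse to a multiple arc which is then suppressed in Definition~\ref{tripdisp}, so one must exhibit a leaf that keeps an intermediate vertex of the cycle alive in the displayed trinet --- is exactly right.

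The one step that does not hold up is your treatment of the ``bare'' case: from ``every leaf is reachable from $h$'' you conclude that $h$ would \emph{dominate} every leaf, but reachability from $h$ does not imply that every root-to-leaf path passes through $h$, so the conclusion that $v_N^*$ lies at or below $h$ does not follow. Fortunately the case is vacuous, and recoverability is not needed for this step at all: in a 1-nested network every vertex $u$ of a cycle $C$ other than its split and end vertices must have indegree 1, since a second incoming arc at $u$ (whether a chord or an arc from outside $C$) produces a cycle meeting $C$ in at least two vertices. Hence $u$ is a tree vertex with an outgoing arc leaving $C$, and any leaf $w$ reached from that arc is not reachable from $h_C$, again by 1-nestedness. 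Since at least one side of $C$ contains such a $u$ (there are no multiple arcs), taking $w$, a leaf $l$ below $h_C$ and any third leaf forces the vertex $v$ of that triple to sit at or above the split vertex and keeps $u$ at degree three, so the cycle survives the suppression steps. With this substitution in place of the domination argument, your proof closes the gap that the paper's one-line justification leaves open.
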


\begin{proof}
This is an immediate consequence of 
Theorem~\ref{green} and the fact that
if $N$ is  recoverable 1-nested network on $X$ then
$\underline{N}$ contains a cycle if and only if 
there exists a trinet $N'\in Tr(N)$ such that $\underline{N'}$ 
contains a cycle.
\end{proof}

\section{Cherries, cactuses and reductions}\label{cherries-catuses-etc}

In the next section, we shall show that the set of trinets displayed
by a phylogenetic network encode the network. To
do this, we will use some operations that can be performed on 
1-nested networks to produce new 1-nested networks
which we shall now introduce. These operations 
are very closely related to the ``$R,T$ and $G$--operations'' 
presented in \cite[Section 4]{CLR11}. In consequence, we
shall omit the proofs of the results that
we state concerning our operations, instead 
citing the related results in \cite[Section 4]{CLR11}
which have very similar proofs.

Suppose $N=(V,A)$ is a 1-nested network on $X$, $|X| \ge 2$.
We call a subset $S \subseteq X$
a {\em cherry} of $N$ if $|S| \ge 2$ and there is some $v_S \in V$
such that $(v_S,x) \in A$ for all $x \in S$ and $(v_S,x) \notin A$ for 
all $x \in X-S$ (see Fig.~\ref{cactuspicture}(a)). 
Moreover,  we shall call such a cherry {\em isolated} 
if $outdegree(v_S)=|S|$ 
and $indegree(v_S)=1$ (see Fig.~\ref{cactuspicture}(b)).
Note that if $S$ is a cherry of $N$ and $S=X$, 
then $N$ is isomorphic to a bush on $X$.
\begin{figure}[b]
\includegraphics[scale=0.5]{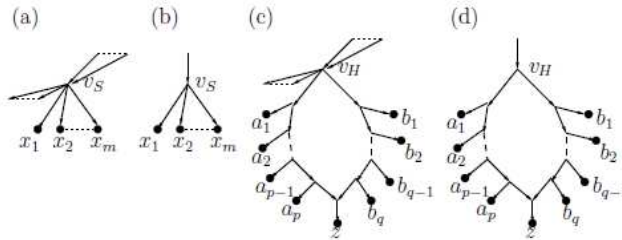}
\caption{(a) A cherry $S=\{x_1,x_2,\dots,x_m\}$, $m\ge 2$,
(b) an isolated
cherry $S=\{x_1,x_2,\dots,x_m\}$, $m\ge 2$, (c) a cactus 
$H = (a_1,a_2,\dots,a_p:b_1,b_2,\dots,b_q:z)$, $p \ge 1, q \ge 0$,
and (d) an isolated cactus 
$H = (a_1,a_2,\dots,a_p:b_1,b_2,\dots,b_q:z)$, $p \ge 1, q \ge 0$.
Note that the arcs ending at $v_S$ and $v_H$ in 
(a) and (c) do not necessarily exist.}
\label{cactuspicture}
\end{figure}
We now define a related concept. 
If $|X| \ge 2$, we call a tuple 
$H = (a_1,a_2,\dots,a_p:b_1,b_2,\dots,b_q:z)$ of distinct elements of $X$
with $p \ge 1$, $q \ge 0$ a {\em cactus} of $N$ (with {\em support} 
$S = \{a_1,a_2,\dots,a_p, b_1,b_2,\dots,b_q,z\}$) if 
there is cycle $C_H$ in $\underline{N}$ 
with split vertex $v_H$ such that 
the network induced by $N$ on $C_H \cup S$
is as pictured in Fig.~\ref{cactuspicture}(c) 
(note that if $q=0$, we
take the tuple to be $H = (a_1,a_2,\dots,a_p:\emptyset:z)$).
Moreover, such a cactus $H$ is called {\em isolated} 
if $indegree(v_H)= 1$ and   
$outdegree(v_H)=2$ (see  Fig.~\ref{cactuspicture}(d)).
Note that a two-leafed network on a set
of size two is a cactus.

Now, suppose that $N$ is 1-nested network on $X$, $|X|\ge 2$. 
In case there is a non-isolated cherry $S$ of $N$ and $z \in S$,
then we define a {\em cherry reduction}
$C=C_{z:S}$ on $N$ to be the network $C_{z:S}(N)$
which is obtained 
by removing all leaves in $S$ except $z$ from $N$, 
together with their incident arcs.
In addition, if $S$ is an isolated cherry 
of $N$ and $z \in S$, then we define an {\em isolated cherry reduction}
$\overline{C}=\overline{C}_{z:S}$ on $N$ to be the 
network $\overline{C}_{z:S}(N)$ which is obtained 
by removing all leaves in $S$ from $N$, together 
with their incident arcs, 
and replacing the vertex $v_S$ by $z$, 
which now becomes a leaf of the new network.

Similarly, suppose there is a cactus
$H = (a_1,a_2,\dots,a_p:b_1,b_2,\dots,b_q:z)$ of $N$
with support $S$.
If $H$ is not isolated, then we define a {\em cactus reduction}
$H = H_{z:S} = 
H_{a_1,a_2,\dots,a_p:b_1,b_2,\dots,b_q:z}$ on $N$ to be the network 
$H_{a_1,a_2,\dots,a_p:b_1,b_2,\dots,b_q:z}(N)$ which is obtained 
by removing the vertices $(C_H - \{v_H\}) \cup (S-\{z\})$, together 
with their induced arcs plus the two outgoing arcs of $v_H$ contained in
$A(C)$, from $N$ and then adding 
in the new arc $(v_H,z)$.
In addition, if $H$ is isolated, then 
we define an {\em isolated cactus reduction}
$\overline{H} = \overline{H}_{z:S} = 
\overline{H}_{a_1,a_2,\dots,a_p:b_1,b_2,\dots,b_q:z}$ 
on $N$ to be the network 
$\overline{H}_{a_1,a_2,\dots,a_p:b_1,b_2,\dots,b_q:z}(N)$ 
which is obtained 
by removing the vertices $(C_H - \{v_H\}) \cup (S-\{z\})$, together 
with their induced arcs
plus the two outgoing arcs of $v_H$, from $N$ and replacing $v_H$ with $z$.

It is clear that  the networks $C_{z:S}(N)$, $\overline{C}_{z:S}(N)$, 
$H_{a_1,a_2,\dots,a_p:b_1,b_2,\dots,b_q:z}(N)$ 
and ${\overline H}_{a_1,a_2,\dots,a_p:b_1,b_2,\dots,b_q:z}(N)$
are all 1-nested networks on the set $X - (S - \{z\})$ and that they
all have $|S|-1$ less leaves than $N$. Moreover we have:

\begin{proposition} \label{cherryorcactus}
\cite[Proposition 2]{CLR11}
Suppose that $N$ is a 1-nested network on $X$, $|X| \ge 1$.
If $|X|\ge 2$, then at least one of the reductions 
$C$, $\overline{C}$, $H$, $\overline{H}$ may be applied to $N$.
Moreover, if none of the reductions $C$, $\overline{C}$, 
$H$, $\overline{H}$ may be applied to $N$, then $|X|=1$
and $N$ is the bush on $X$.
\end{proposition}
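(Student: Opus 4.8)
Suppose that $N$ is a 1-nested network on $X$, $|X| \ge 1$. If $|X|\ge 2$, then at least one of the reductions $C$, $\overline{C}$, $H$, $\overline{H}$ may be applied to $N$. Moreover, if none of the reductions may be applied to $N$, then $|X|=1$ and $N$ is the bush on $X$.

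Let me think about how to prove this.The plan is to prove the contrapositive form that yields both assertions at once: I will show that if \emph{none} of the four reductions applies to $N$, then $|X|=1$ and $N$ is the bush on $X$. This suffices, since the first assertion is then immediate (if $|X|\ge 2$ the reductions cannot all fail), and the ``moreover'' part is exactly what is proved. The key preliminary observation is that a reduction fails to apply precisely when the corresponding feature is absent: every cherry is either isolated or not (so admits $\overline{C}$ or $C$), and likewise every cactus admits $\overline{H}$ or $H$. Hence ``no reduction applies'' is equivalent to ``$N$ has no cherry and no cactus''. In particular, the absence of cherries means that every vertex of $N$ has at most one leaf among its children, since the full set of leaf-children of any vertex with two or more of them would constitute a cherry.

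Before the main argument I would record one structural fact about $1$-nested networks: every cycle $C$ of $\underline{N}$ has a unique split vertex and a unique end vertex, so that $|Z(C)|=1$ and $C$ is the union of two directed paths from its split vertex to $h_C$. This is justified by a short argument in the spirit of Corollary~\ref{green-cutvertex}: a second ``source'' on $C$, together with the directed paths from $\rho_N$ down to the two sources, would produce a second cycle meeting $C$ in at least two vertices, contradicting $1$-nestedness. This legitimises speaking of the split vertex $v_C$ of each cycle and lets me compare cycles through the partial order $<_N$ using their split vertices.

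Now assume $N$ has neither a cherry nor a cactus, and first rule out cycles. If $\underline{N}$ had a cycle, I would choose the cycle $C$ whose split vertex $u=v_C$ is maximal under $<_N$ (lowest in $N$). Every side vertex of $C$ and the end vertex $h_C$ lie strictly below $u$, hence so does any pendant subnetwork hanging off these vertices; were such a pendant to contain a cycle, that cycle's split vertex would lie strictly below $u$, contradicting the choice of $u$. Thus every pendant off $C$ is a tree, and since $N$ has no cherry this tree can have no interior vertex (a maximal non-leaf vertex of it would be a tree vertex, hence have outdegree at least $2$ and so carry at least two leaf-children, a cherry); therefore each pendant is a single leaf. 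Consequently each intermediate side vertex of $C$ carries exactly one pendant leaf (at least one, being a tree vertex, and at most one by the no-cherry condition), and $h_C$ has exactly one leaf child $z$. Reading the side pendants as $a_1,\dots,a_p$ and $b_1,\dots,b_q$, with $p\ge 1$ since the two directed paths from $u$ to $h_C$ cannot both be single arcs ($N$ has no multiple arcs), and placing $z$ at the bottom, shows that the network induced on $C\cup S$ is exactly a cactus as in Fig.~\ref{cactuspicture}(c); this contradicts the assumption that $N$ has no cactus. Hence $N$ has no cycle, i.e.\ $N$ is a phylogenetic tree.

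Finally, for a tree with no cherry I would again take a maximal non-leaf vertex $v$: all its children are leaves, and if $v$ were a tree vertex it would have outdegree at least $2$ and thus produce a cherry, so the only non-leaf is $\rho_N$. The root can then have at most one child (two leaf-children would form a cherry, while an internal child would reintroduce a non-root tree vertex and hence a cherry), and that child must be a leaf, forcing $|X|=1$ and $N$ to be the bush on $X$, as required. I expect the cycle step to be the main obstacle: precisely matching the induced structure on $C\cup S$ to the picture in Fig.~\ref{cactuspicture}(c) requires orchestrating ``lowest split vertex'' (to exclude cycles below) together with ``no cherry'' (to force single pendant leaves), and the underlying uniqueness-of-split-vertex lemma is the other point needing care.
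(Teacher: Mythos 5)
The paper does not actually prove this proposition: it states it with a citation to \cite[Proposition 2]{CLR11}, and Section~\ref{cherries-catuses-etc} explicitly announces that proofs of the reduction results are omitted because they closely parallel the analysis of the ``$R$, $T$ and $G$--operations'' in \cite[Section 4]{CLR11}. Your proposal therefore supplies a self-contained argument where the paper supplies none, and its overall architecture --- prove the contrapositive, observe that ``no reduction applies'' means ``no cherry and no cactus'', pick a cycle whose split vertex is lowest, show that everything hanging off it consists of single pendant leaves so that it is a cactus, and otherwise reduce to a cherryless tree which must be the one-leaf bush --- is sound and reaches the correct conclusion, including the $|X|=2$ cases (bush, two-leaf tree, two-leafed network). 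Two points would need to be written out to make it complete. First, the preliminary lemma that every cycle of a 1-nested network has a unique split vertex and a unique end vertex is genuinely load-bearing and only sketched; note that in this paper it can alternatively be obtained by combining Corollary~\ref{green-cutvertex}, Lemma~\ref{1-nested-implies trinets} and Proposition~\ref{cycles} when $|X|\ge 3$, but the case $|X|=2$ and the passage from $|Z(C)|=1$ to ``union of two directed paths'' still need a line of argument. Second, your justification that the pendants off the chosen cycle $C$ are trees only excludes cycles \emph{contained in} a pendant (via maximality of the split vertex $u$); it does not by itself exclude a hybrid vertex inside a pendant whose second parent lies outside it, since the resulting cycle could have its split vertex \emph{above} $u$. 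That configuration is in fact impossible, but for a different reason: every path into the pendant must descend $C$ from $u$ to the attachment vertex (side vertices of $C$ have indegree 1 by 1-nestedness), so such a cycle would share at least two vertices with $C$, contradicting 1-nestedness directly rather than via the choice of $u$. With these repairs your argument is a correct, and usefully explicit, replacement for the omitted proof.
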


We can also define `inverses' of $C^{-1}, \overline{C}^{-1}, 
H^{-1}, \overline{H}^{-1}$ of
the reductions $C$, $\overline{C}$, $H$, $\overline{H}$ as follows.
Given a 1-nested network $N$ on $X$, $|X| \ge 1$, a leaf $z \in X$ of $N$, 
and a set finite $S$ with $|S| \ge 2$ and $S \cap X =\{z\}$, we define the 
{\em cherry expansion}  $C^{-1}_{z:S}$ of $N$ to be 
the network $C^{-1}_{z:S}(N)$  obtained
by replacing leaf $z$ by a new vertex $v$, and 
adding in new arcs $(v,s)$ for all $s \in S$.
Clearly $C^{-1}_{z:S}(N)$ is a 1-nested network on $X \cup S$.
Isolated cherry, cactus and isolated cactus 
expansions $\overline{C}^{-1}$, 
$H^{-1}$, $\overline{H}^{-1}$, 
corresponding to $\overline{C}$, $H$ and 
$\overline{H}$, are defined in a similar way.

It is straight-forward to see that a reduction 
and its corresponding expansion are mutual inverses, 
in that when one is applied to a 1-nested network $N$ 
on $X$ and then its inverse, 
we obtain a network that is isomorphic to $N$. 
Moreover, we have:

\begin{lemma}\label{isomorphic}
\cite[Lemma 4]{CLR11}
Let $N$ and $N'$ be two 1-nested networks on $X$, $|X| \ge 3$.
If $N$ and $N'$ are isomorphic, then if one of the 
reductions $C$, $\overline{C}$, $H$, 
$\overline{H}$ (respectively, expansions $C^{-1}$, 
$\overline{C}^{-1}$, $H^{-1}$, $\overline{H}^{-1}$)
may be applied to $N$, then the same one
may also be applied to $N'$ and the
two resulting 1-nested networks are isomorphic.
\end{lemma}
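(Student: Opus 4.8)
The plan is to exploit the fact that, since $N$ and $N'$ are isomorphic $1$-nested networks on the same leaf set $X$, there is a directed-graph isomorphism $\iota\colon V(N)\to V(N')$ which restricts to the identity on $X=L(N)=L(N')$ (this is exactly the meaning of isomorphism in the case $Y=X$ recorded in Section~\ref{preliminaries}). The entire lemma then reduces to a single observation: the data defining each reduction and expansion --- a cherry, an isolated cherry, a cactus, or an isolated cactus, together with the distinguished leaf $z$ --- is specified by purely combinatorial conditions relative to the leaves, and any such leaf-fixing isomorphism transports this data from $N$ to $N'$.

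First I would verify that applicability is preserved. A set $S\subseteq X$ is a cherry of $N$ with parent vertex $v_S$ precisely when $(v_S,x)\in A(N)$ for all $x\in S$ and $(v_S,x)\notin A(N)$ for all $x\in X-S$; since $\iota$ preserves arcs and fixes every element of $X$, the vertex $\iota(v_S)$ witnesses $S$ as a cherry of $N'$, and conversely. Isolation is the condition $outdegree(v_S)=|S|$, $indegree(v_S)=1$, a local degree condition preserved by any graph isomorphism, so $S$ is an isolated cherry of $N$ if and only if it is one of $N'$. For a cactus $H=(a_1,\dots,a_p:b_1,\dots,b_q:z)$, the witnessing data is a cycle $C_H$ in $\underline{N}$ with split vertex $v_H$ together with a prescribed adjacency pattern to the leaves in its support; because $\iota$ maps cycles to cycles, split vertices to split vertices, and preserves all incidences while fixing $X$, the image $\iota(C_H)$ realises the same tuple $H$ in $N'$, and isolation of $H$ is again the degree condition $indegree(v_H)=1$, $outdegree(v_H)=2$ on $v_H$. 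For the expansions $C^{-1}$, $\overline{C}^{-1}$, $H^{-1}$, $\overline{H}^{-1}$ the applicability conditions refer only to $z\in X$, the auxiliary set $S$, and $X$ itself, all of which coincide for $N$ and $N'$, so applicability is immediate.

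Next I would produce the isomorphism between the two resulting networks by transporting $\iota$ across the operation. For a non-isolated cherry or cactus reduction, the reduced network is obtained from $N$ by deleting a specified set of vertices and arcs (and, for the cactus, adding the arc $(v_H,z)$); restricting $\iota$ to the surviving vertices gives a bijection onto the surviving vertices of the corresponding reduction of $N'$, and one checks it preserves the remaining arcs, the added arc $(v_H,z)$ being carried to $(\iota(v_H),z)$, exactly the arc added in $N'$. In the isolated variants, where $v_S$ (respectively $v_H$) is deleted from the leaf structure and relabelled as the leaf $z$, the induced map sends $v_S\mapsto\iota(v_S)$; since both of these vertices are renamed $z$, the induced map still fixes the new leaf $z$ and all other leaves. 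Dually, for the expansions one extends $\iota$ by sending each newly created vertex and each newly added leaf to its counterpart identically. In every case the resulting map is a directed-graph isomorphism fixing all leaves, so the two reduced (respectively expanded) networks are isomorphic.

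I expect the only real work to be the case-by-case bookkeeping of the last step --- confirming, for each of the four operations and their inverses, that the transported map genuinely respects the modified arc sets, most delicately in the isolated cases where a former interior vertex is promoted to a leaf. There is, however, no conceptual obstacle: everything follows by directly transporting structure along the leaf-fixing isomorphism $\iota$, which is why the argument closely parallels that of \cite[Lemma 4]{CLR11}.
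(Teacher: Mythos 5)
Your argument is correct: transporting the cherry/cactus data along the leaf-fixing directed-graph isomorphism $\iota$ (which exists and fixes $X$ pointwise by the paper's definition of isomorphism for networks on the same leaf set) and then restricting or extending $\iota$ across the reduction or expansion is exactly the intended proof. Note that the paper itself deliberately omits a proof of this lemma, deferring to \cite[Lemma 4]{CLR11}, so there is no in-paper argument to compare against; your proposal supplies the standard structure-transport argument that that reference uses, and I see no gap in it.
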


\section{Encoding 1-nested networks with trinets}\label{encoding-1-nested}

In this section we show that the set of (necessarily 1-nested) trinets 
displayed by a 1-nested network $N$ on $X$ encodes $N$
(see Theorem~\ref{encode}).

We begin by describing how to characterize 
cherries and cactuses in a 1-nested 
network in terms of their trinets, starting with cherries. 
To this end,
we associate to a trinet set $\T$ on $X$ and a non-empty
subset $S\subseteq X$ the trinet set
$$
\T|_{S} := \{ N \in \T \,:\, S \cap L(N) \neq \emptyset  \}.
$$

\begin{lemma} \label{fatcherry}
Suppose $N=(V,A)$ is a 1-nested network on $X$, $|X| \ge 3$, 
and let $S\subseteq X$ with $|S|\geq 2$. Let $\T$ be a non-empty 
subset of $Tr(N)$. Then $S$ is a cherry of $N$ with 
$\T =Tr(N)|_S$ if and only if $\T$ satisfies 
the following properties:
\begin{itemize}
\item[(C1)] $L(N')\cap S\not=\emptyset$, for all $N'\in\T$
(or equivalently, $\T|_S=\T$).
\item[(C2)] For all $\{x,y\} \in {S \choose 2}$ and all $z \in X-S$, 
either $T_1(x,y,z)$, $T_2(x,y,z)$, $N_3(z,x,y)$, $N_4(x,y,z)$, 
$N_9(x,y,z)$ or $N_{10}(z,x,y)$ is in $\T$.
\item [(C3)] For all  $\{x,y,z\} \in  {S \choose 3}$, $T_2(x,y,z) \in \T$.
\item [(C4)] There is no $S' \subseteq X$ such 
that $S \subset S'$ and $\T$
satisfies (C2) and (C3) with $S$ replaced by $S'$.
\end{itemize}
Moreover, if this is the case and 
$S \neq X$ (or, equivalently, $|X-S| \ge 1$), then 
$S$ is isolated if and only if $\T$ also satisfies:
\begin{itemize}
\item[(C5)] For all $\{x,y\} \in {S \choose 2}$ and 
all $z \in X-S$, either 
$T_1(x,y,z)$, $N_3(z,x,y)$ or $N_4(x,y,z)$ is  contained in $\T$.
\end{itemize}
\end{lemma}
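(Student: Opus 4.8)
The plan is to prove the two biconditionals (the main one and the ``moreover'') separately, in each case passing between the combinatorial conditions (C1)--(C5) and the local structure of $N$ around the vertices lying just above the leaves of $S$. The single fact that drives everything is the following consequence of Definition~\ref{tripdisp}: if two leaves $x,y$ are both children of a common vertex $v$ of $N$, then in the trinet $N_{\{x,y,z\}}$ displayed on any triple $\{x,y,z\}$ the images of $x$ and $y$ still emanate from a common vertex, i.e.\ $\{x,y\}$ forms a ``cherry'' of that trinet; and, inspecting Fig.~\ref{trinets} via Lemma~\ref{1-nested-implies trinets}, the 1-nested trinets on $\{x,y,z\}$ in which $x,y$ form such a cherry are exactly the six trinets $T_1(x,y,z)$, $T_2(x,y,z)$, $N_3(z,x,y)$, $N_4(x,y,z)$, $N_9(x,y,z)$, $N_{10}(z,x,y)$ listed in (C2). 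Establishing this correspondence (a finite check against the fourteen trinets) is the combinatorial backbone of the whole argument.

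First I would treat the forward implication. Assuming $S$ is a cherry with associated vertex $v_S$ and $\T = Tr(N)|_S$, condition (C1) is immediate from the definition of $Tr(N)|_S$. For (C3), any triple in $S$ consists of three children of $v_S$, so $v(\{x,y,z\}) = v_S$ and, after suppression, the displayed trinet is the star $T_2(x,y,z)$. For (C2), each pair $x,y \in S$ are children of $v_S$, so by the backbone observation the displayed trinet on $\{x,y,z\}$ for any $z \in X-S$ is one of the six listed trinets. The only non-formal point is (C4): if some $S' \supsetneq S$ satisfied (C2) and (C3), I would pick $w \in S'-S$ and $x,y \in S$; since $w$ is not a child of $v_S$, the path from $v_S$ to $w$ must pass through a tree vertex of outdegree $\ge 2$ or a hybrid vertex (a phylogenetic network has no vertex of indegree and outdegree both $1$), and I would use this to exhibit a triple inside $S'$ whose displayed trinet is not of the form required by (C2)/(C3), contradicting the assumed maximality.

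For the converse I would assume (C1)--(C4) and recover the cherry. The heart of the matter, and the step I expect to be the main obstacle, is to convert trinet information back into the existence of a common parent: using (C2) I would show that if every displayed trinet on $\{x,y,z\}$, as $z$ ranges over $X-\{x,y\}$, has $x,y$ as a cherry, then $x$ and $y$ must in fact be children of a single vertex of $N$. The subtle point is that (C3) alone does not suffice, since a leaf descending from a common ancestor through a prunable tree vertex can produce a \emph{spurious} star $T_2$ after suppression; it is precisely (C2), forcing the distinguished pair to be a genuine cherry against \emph{every} outside leaf, that rules this out. Having located a common vertex $v_S$, I would use (C3) to show that all of $S$ shares this parent, and (C4) to show that $v_S$ has no further leaf-children, so that $S$ is exactly the set of leaf-children of $v_S$, i.e.\ a cherry; the equality $\T = Tr(N)|_S$ then follows, with inclusion $\subseteq$ given by (C1) and the reverse inclusion obtained by matching each displayed trinet meeting $S$ against (C2)/(C3).

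Finally, for the ``moreover'' statement I would characterise isolation through the local shape of $v_S$. If $S$ is isolated then $indegree(v_S)=1$ and $outdegree(v_S)=|S|$, so $v_S$ sits below a single cut arc and has no children other than the leaves of $S$; consequently no outside leaf $z$ can join $x,y$ in a star and $v_S$ cannot be the hybrid or split vertex of a cycle through $x,y$, which excludes $T_2$, $N_9$ and $N_{10}$ and leaves only $T_1$, $N_3$, $N_4$ --- that is (C5). Conversely, assuming (C5), an extra child of $v_S$ would, after suppression, yield a spurious star $T_2(x,y,w)$ for some outside $w$, while $indegree(v_S)\ge 2$ would make $v_S$ a hybrid vertex of a cycle and force some displayed trinet to be $N_9$ or $N_{10}$; both are forbidden by (C5), so $v_S$ is isolated. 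As before, each of these implications reduces to a finite inspection of the trinets in Fig.~\ref{trinets}.
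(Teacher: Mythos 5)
Your proposal follows essentially the same route as the paper: the forward direction is a routine check, the converse locates the common parent $v=v(S)$ via (C3), uses (C2) to force each element of $S$ to be a child of $v$, invokes (C4) to exclude further leaf-children, and characterises isolation by the in- and outdegree of $v_S$ using (C5). The only divergence is in which specific trinets you name as witnesses in the degree arguments (the paper also uses $N_{10}$ alongside $T_2$ for $outdegree(v_S)>|S|$, and $N_9$ alone for $indegree(v_S)>1$), but since you explicitly reduce these steps to a finite inspection of Fig.~\ref{trinets}, this is immaterial.
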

\begin{proof}
Suppose $\T =Tr(N)|_S$ holds for some cherry $S$ of $N$. 
Then it is straight-forward to check that $\T$ satisfies (C1)--(C4).

Conversely, suppose $\T$ satisfies (C1)--(C4). Let $v = v(S)$. 
Note that $v(\{x,y\})=v$ for all $\{x,y\} \in {S \choose 2}$, 
since otherwise there would 
exist some $z \in S$ such that $T_2(x,y,z) \not\in \T$, 
in contradiction to (C3). 
Moreover, suppose there were some $z \in X-S$, $x \in S$ with 
$v(\{z,x\}) >_{N} v$. Let $y \in S -\{x\}$ (which 
exists since $|S|\ge 2$). 
Then none of the trinets $T_1(x,y,z)$, $T_2(x,y,z)$,
$N_3(z,x,y)$, $N_4(x,y,z)$
$N_9(x,y,z)$ or $N_{10}(z,x,y)$ could be contained in $\T$, in 
contradiction to (C2). Thus, for all $z\in X-S$ and all
$x\in S$, we have $v(\{z,x\}) <_{N} v$ with possibly
equality holding. It follows that
$(v,x) \in A$ for all $x \in S$. 

Now, suppose there is some $r \in X-S$ with $(v,r) \in A$. 
Let $S' = S \cup \{r\}$.  Then it is straight-forward
to check that, for all $x \in S'$ and all $z \in X- S'$, 
either $T_1(x,r,z)$, $T_2(x,r,z)$, $N_3(z,x,r)$, $N_4(x,r,z)$, 
$N_9(x,r,z)$ or $N_{10}(z,x,r)$ is in $\T$, and that 
$T_2(x,y,r) \in \T$ for all $\{x,y\} \in {S' \choose 2}$. This implies
that $S'$ satisfies (C2) and (C3) with $S$ replaced by $S'$, which 
contradicts (C4). In particular, it follows that $S$ is a cherry of $N$.

To see that $\T= Tr(N)|_S$ holds note first that
$\T\subseteq Tr(N)|_S$ is a consequence of (C1). To
see that $Tr(N)|_S\subseteq \T$ suppose $N'\in Tr(N)|_S$.
Then $L(N')\cap S\not =\emptyset$ and so  $N'\in\T$
follows from considering
the size of the intersection $L(N')\cap S$ in conjunction 
with Properties (C2) and (C3). 
 
To complete the proof, suppose that $X\not =S$. First note that if 
$S$ is an isolated cherry of $N$, then (C5) clearly holds.
Conversely, if $\T$ satisfies (C5), then let $v\in V$ be the vertex
with $(v,x) \in A$ for all $x \in S$ and $(v,x)\not\in A$ 
for all $x \in X-S$
(which exists since $S$ is a cherry by (C2)--(C4)).
Then $outdegree(v)= |S|$, since otherwise 
there would exist some $\{x,y\} \in {S \choose 2}$ and $z \in X-S$ 
with $z >_N v$ such that either $T_2(x,y,z)$ or $N_{10}(z,x,y) \in \T$, 
in contradiction to (C5).

Now, since $|X-S|\ge 1$, $indegree(v) \ge 1$. Suppose 
$indegree(v) > 1$. Then there must exist some $z \in X-S$
and $\{x,y\} \in {S\choose 2}$ such that $N_9(x,y,z) \in \T$, which
contradicts (C5). Therefore $indegree(v) = 1$, which completes the proof.
\end{proof}

We now present a similar result for cactuses. 

\begin{lemma} \label{cactus}
Let $N$ be a 1-nested network on $X$, $|X| \ge 3$,
and let $H=(a_1,\dots, a_p:b_1,\dots,b_q:z)$ be a tuple of distinct
elements in $X$ with $p\geq 1$ and $q\geq 0$. Put
$S=\{a_1,\dots, a_p, b_1,\dots,b_q, z\}$ and let 
$\T$ be a non-empty subset of 
$Tr(N)$. Then $H$ is a cactus of $N$ with support 
$S$ and  $\T =Tr(N)|_S$
if and only if, with $A=\{a_1,\dots,a_p\}$ and $B=\{b_1,\dots,b_q\}$,
$\T$ satisfies the following properties:
\begin{itemize}
\item[(H1)]  $L(N')\cap S\not=\emptyset$
for all $N'\in \T$ (or, equivalently $\T|_S=\T$).
\item[(H2)] $N_1(x,z,y) \in \T$ for all $x \in A$, $y \in B$.
\item[(H3)] $N_2(z,x,x') \in \T$ for all 
$x=a_i, x'=a_j$, $1 \le i < j \le p$, or 
$x=b_r,x'=b_s$, $1 \le r < s \le q$.
\item[(H4)] $T_1(x,x',x'') \in \T$ for all 
$x=a_i, x'=a_j, x''=a_k$, $1 \le i < j < k \le p$,
or  $x=b_r, x'=b_s, x''=b_t$, $1 \le r < s < t \le q$.
\item[(H5)] For all $w \in X-S$ either $N_5(z,x,w)$, or 
$N_6(z,x,w)$, or $N_7(w,x,z) $, or $N_8(z,x,w)$, or 
$N_{11}(z,x,w)$, 
or $N_{12}(w,x,z)$ is contained in $\T$, for all $x \in A$ or $x \in B$.
\item[(H6)] For all $w \in X-S$,  either $T_1(x,x',w)$, or 
$N_3(w,x,x')$, or $N_4(x,x',w)$ is contained in $\T$,
for all $x\neq x' \in A$ or $x \neq x'\in B$.
\item[(H7)] For all $w \in X-S$,  one of 
$T_1(x,y,w)$, $T_1(x,w,y)$, $T_1(y,w,x)$, $T_2(x,y,w)$, $N_4(x,y,w)$,
and $N_1(x,w,y)$ is contained in $\T$, for all $x \in A$ and $y \in B$.
\item[(H8)] There exists no tuple 
$H=(c_1,\dots, c_t: d_1\dots,d_s:z)$ of distinct elements in $X$,
$t\geq 1$ and  $s\geq 0$, with 
$S\subsetneq S':=\{c_1,\dots, c_t, d_1\dots,d_s,z\}$ such 
that $\T$ satisfies (H2)--(H7) for $S'$.
\end{itemize}
Moreover, if this is the case and $S\not =X$, then 
$H$ is isolated if and only if $\T$ also satisfies: 
\begin{itemize}
\item[(H9)] For all $w \in X-S$, $T_2(x,y,w) \not\in \T$,
for all $x \in A$, $y \in B$, and
$N_8(z,x,w) \not\in \T$, $N_{11}(z,x,w) \not\in \T$ and 
$N_{12}(w,x,z) \not\in \T$
for all $x \in A$ or $x \in B$.
\end{itemize}
\end{lemma}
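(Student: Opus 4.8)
The plan is to follow the template of the proof of Lemma~\ref{fatcherry}, treating the forward implication as a routine structural verification and concentrating the real work on the converse and on the isolated case. For the forward implication, suppose $H$ is a cactus of $N$ with support $S$ and that $\T=Tr(N)|_S$. Then (H1) is immediate from the definition of $Tr(N)|_S$, and the remaining conditions are read off directly from the induced network on $C_H\cup S$ pictured in Fig.~\ref{cactuspicture}(c): three leaves drawn entirely from $A$ (or entirely from $B$) lie along a single directed arm of the cycle, so the displayed trinet degenerates to the caterpillar tree $T_1$, giving (H4); two same-side leaves together with $z$ display $N_2$, giving (H3); and one leaf from each of $A$ and $B$ together with $z$ sit on opposite arms with $z$ at the end vertex, displaying $N_1$, giving (H2). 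Conditions (H5)--(H7) simply enumerate, for each type of external leaf $w\in X-S$, the finitely many trinets that can arise according to where $w$ attaches relative to the cycle, and (H8) holds because the support of a cactus is pinned down by its cycle and cannot be properly enlarged while (H2)--(H7) persist.

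The converse is the substantive direction. Assuming (H1)--(H8), I would reconstruct the cactus in stages. First, (H4) forces the elements of $A$ (respectively $B$) to be linearly ordered by $<_N$ along a common directed path, since a displayed $T_1$ among three same-side leaves rules out any branching between them. Next, combining (H3) and (H2) with the distinguished role of $z$ locates a single cycle $C$ in $\underline{N}$: the $N_2$-trinets force each of $A$ and $B$ to hang off one of two directed paths sharing a top and a bottom vertex, while the $N_1$-trinets force these two paths to be genuinely distinct arms meeting at a split vertex $v_H$ and at an end vertex off which $z$ hangs. Because every trinet in $\T\subseteq Tr(N)$ is one of the fourteen admissible forms, Theorem~\ref{green} and Proposition~\ref{cycles} apply, so $|Z(C)|=1$ and the split and end vertices of $C$ are well defined. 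Conditions (H5)--(H7) are then used to verify that the network induced on $C\cup S$ has no extra branchings and no hidden leaves along the two arms, so that it matches Fig.~\ref{cactuspicture}(c) exactly, while the maximality condition (H8) guarantees that $S$ is precisely the support. Finally, $\T=Tr(N)|_S$ follows as in Lemma~\ref{fatcherry}: the inclusion $\T\subseteq Tr(N)|_S$ is (H1), and the reverse inclusion is obtained by a case analysis on the size of $L(N')\cap S$ using (H2)--(H7).

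For the isolated statement, assume $S\neq X$ and that the cactus structure has already been established. The forward direction is routine. For the converse, I would argue that (H9) forbids precisely those trinets an external leaf $w$ would contribute were $v_H$ not isolated: the surviving presence of any of $T_2(x,y,w)$ (with $x\in A$, $y\in B$) or of $N_8$, $N_{11}$, $N_{12}$ (on a side-leaf together with $z$ and $w$) would witness either an outgoing arc of $v_H$ distinct from the two cycle arcs, forcing $outdegree(v_H)>2$, or an incoming arc into $v_H$ beyond the single permitted one, forcing $indegree(v_H)>1$. Excluding all of these yields $indegree(v_H)=1$ and $outdegree(v_H)=2$, exactly paralleling the treatment of (C5) in the isolated-cherry case.

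The main obstacle will be the third stage of the converse: showing that (H5)--(H7) are strong enough to certify that the induced network on $C\cup S$ is precisely the one in Fig.~\ref{cactuspicture}(c), with no spurious internal branching and no leaves other than those of $S$ attached to the cycle. This demands a careful case analysis over the location of each external leaf $w$ and over which of the six (respectively three, six) admissible trinets in (H5) (respectively (H6), (H7)) it realises, correlating each possibility with the position of $v(\{w,s\})$ for $s\in S$ on the two arms. In particular one must explicitly exclude configurations that would create a second cycle meeting $C$ in more than one vertex; these are ultimately forbidden by Proposition~\ref{cycles}, but they have to be argued away by hand within the case analysis, which is where most of the technical bookkeeping resides.
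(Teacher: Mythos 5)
Your plan follows the same route as the paper: the forward direction is read off Fig.~\ref{cactuspicture}(c); the converse reconstructs the cycle from (H2)--(H4), uses (H5)--(H7) to rule out extra branchings, and uses (H8) for maximality of the support; the equality $\T=Tr(N)|_S$ is obtained by a case analysis on $|L(N')\cap S|$; and the isolated case mirrors the treatment of (C5), with $outdegree(v_H)>2$ or $indegree(v_H)>1$ forcing one of the trinets excluded by (H9). However, you have located the ``main obstacle'' in the wrong place. The hardest step in the converse is not verifying that the induced network on $C\cup S$ matches the figure, but establishing that there \emph{is} a single cycle $C$ at all. Each trinet $N_2(z,a_i,a_j)$ guaranteed by (H3) is a priori witnessed by its own cycle $C_{i,j}$ of $\underline{N}$, and your phrase ``the $N_2$-trinets force each of $A$ and $B$ to hang off one of two directed paths sharing a top and a bottom vertex'' silently assumes $C_{i,j}=C_{k,l}$ for all pairs. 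The paper proves this by showing first that the end vertices of $C_{1,j}$ and $C_{1,l}$ coincide (a common ancestor $y_z$ of $z$ below both would otherwise close a cycle meeting $C_{1,j}$ in two vertices, contradicting 1-nestedness), then analogously that the attachment vertices of $a_1$ coincide, whence $C_{1,j}$ and $C_{1,l}$ share two vertices and are equal. Without this your reconstruction does not get off the ground, so it needs to be added as an explicit stage.

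Two smaller points. First, you invoke Theorem~\ref{green} and Proposition~\ref{cycles} to obtain $|Z(C)|=1$; this is unnecessary and somewhat backwards, since the lemma already assumes $N$ is 1-nested, which directly gives that every cycle of $\underline{N}$ is the union of two directed paths with a common split and end vertex and that no two cycles share more than one vertex --- exactly the facts used throughout. Second, (H8) does more than ``guarantee that $S$ is precisely the support'': in the paper it is the tool that excludes extra vertices on the arm $P$ carrying $A$ (an extra vertex $u$ would lead, via (H6), to $T_1(a_1,a_2,w)\in\T$ for some external $w$ and hence to a strictly larger support satisfying (H2)--(H7)), so it belongs inside your third stage rather than after it.
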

\begin{proof}
Suppose $H$ is a cactus of $N$ with support $S$ and $\T=Tr(N)|_S$.
Then it is straight-forward to see that $\T$ must satisfy (H1)--(H8).

Conversely, suppose  
$\T$ satisfies (H1)--(H8) with $A$ and $B$
as specified. We claim that $H$ is 
a cactus of $N$ with support $S$. We prove the claim 
for $q=0$ and remark that
the proof for $q\geq 1$ is similar. Let $x\in A$. 
If $|S|=2$ then choose some $w\in X-S$.
By (H5), one of the trinets
$N_5(z,x,w)$, $N_6(z,x,w)$, $N_7(w,x,z)$, $N_8(z,x,w)$, 
$N_{11}(z,x,w)$, $N_{12}(w,x,z)$  must be contained in $\T$.
But then $H$ must clearly be a cactus of $N$ (with support $S$). 
 
Assume that $ |S|\geq 3$. Then $|A|\geq 2$ and, by (H3), 
$N_2(z,a_i,a_j)\in\T$  or $N_2(z,a_j,a_i)\in\T$ holds for all 
$\{i,j\}\in{\{1,\dots, p\}\choose 2}$. Since
$\T\subseteq Tr(N)$, there must exist a cycle $C_{i,j}$
in $\underline N$ with split vertex $v_{i,j}:=v_{C_{i,j}}$ and end
vertex $b_{i,j}:=b_{C_{i,j}}$ that gives rise to that trinet
on $\{z,a_i,a_j\}$, 
$\{i,j\}\in{\{1,\dots, p\}\choose 2}$. We show that $C_{i,j}=C_{k,l}$ 
holds for all $\{i,j\},\{k,l\} \in{\{1,\dots, p\}\choose 2}$. To see this
it suffices to show that $C_{i,j}=C_{i,l}$ holds for all 
$i\in\{1,\dots, p\}$ and all $\{k,l\} \in{\{1,\dots, p\}-\{i\}\choose 2}$. 
So assume 
for contradiction that there exists some $i\in\{1,\dots, p\}$ and some 
$\{j,l\} \in{\{1,\dots, p\}-\{i\}\choose 2}$ with $C_{i,j}\not=C_{i,l}$. 
Without loss of generality assume that $i=1$. 

Note that since $N$ is 1-nested there must exist, 
for all $t\in\{2,\dots,p\}$ and 
all $x\in\{a_1,a_t,z\}$, a unique last vertex $v_x^{1,t}$ in $C_{1,t}$ 
that lies on every
path from $\rho_N$ to $x$. Clearly, $v_z^{1,t}$ is the
end vertex of $C_{1,t}$  and $v_{a_1}^{1,t}$ is neither  
the end vertex nor the split vertex of $C_{1,t}$, $t\in\{2,\dots,p\}$. 
Put $v=v(\{v_{1,j},v_{1,l}\})$. 

We first show that $b_{1,j}=b_{1,l}$. Suppose  for contradiction that
$b_{1,j}\not=b_{1,l}$. Then since $indegree(z)=1$, 
there must exist a vertex $y_z$ distinct from $z$ that lies simultaneously 
on any path in $N$ from $b_{1,j}=v_{z}^{1,j}$ to 
$z$ and on  any path in $N$  from $b_{1,l}=v_{z}^{1,l}$ to $z$. 
Without loss of 
generality, we may assume that $y_z$ is as close to $z$ as 
possible. So there must exist a cycle $C$ in $\underline{N}$ 
with $\{v, v_{1,j}, b_{1,j},y_z, b_{1,l}, v_{1,l}\}\subseteq C$
with possibly $v=v_{1,j}$ or $v=v_{1,l}$ or $v=v_{1,j}=v_{1,l}$
or $b_{1,j}=y_z$ or $b_{1,l}=y_z$ holding. Since
$\{v_{1,j},b_{1,j}\}\subseteq C\cap C_{1,j}$
and $N$ is 1-nested this is impossible. 
Thus $b_{1,j}=b_{1,l}$, as required. 
 
Similar arguments with $z$
replaced by $a_1$ in the definition of $y_z$ also imply that
$v_{a_1}^{1,j}=v_{a_1}^{1,l}$ must hold. But then 
$C_{1,j}$ and $C_{1,l}$ intersect in more than one
vertex which is impossible as $N$ is 1-nested. Thus 
$C_{1,j}=C_{1,l}$ must hold for all $j,l\in \{2,\dots,p\}$.
Moreover, by (H4), $v_{a_r}^{1,2}\not=v_{a_s}^{1,2}$ for all 
$\{r,s\}\in{\{1,\dots, p\}\choose 2}$. Thus there exists a directed
path $P$ from $v_{1,2}$ to $b_{1,2}$ that crosses the vertices 
$v_{a_1}^{1,2}, v_{a_2}^{1,2},\dots, v_{a_p}^{1,2}$ in that order.

To finish the proof of the claim that $H$ is a cactus 
of $N$ with support $S$, we next establish that 
$V(P)=Y:=\{v_{a_1}^{1,2}, v_{a_2}^{1,2},\dots, v_{a_p}^{1,2}, 
v_{1,2},b_{1,2}\}$. Suppose for contradiction that this is
not the case and that there exists some $u\in V(P)-Y$. 
Without loss of generality, we may assume that $(u,v_{a_1}^{1,2})\in A(P)$. 
Since $N$ is 1-nested, there exists some leaf $w\in L(N)-S$ that
is reachable  from $u$ without crossing any further vertex in $C_{1,2}$.
We distinguish the cases that $X=S$ and that $X\not=S$.
If $X=S$ then this is impossible and so $V(P)=Y$, as required.
Since $C_{1,2}$ is a cycle in $\underline{N}$ and
$N$ is 1-nested it follows that $(v_{1,2},b_{1,2})$ is an arc in
$N$. But this implies that $H$ is a cactus of $N$ (with support $S$).

So assume that $S\not=X$. Then (H6) applied to $a_1$, $a_2$, and $w$,
combined with the fact that $N$ is 1-nested, implies that the trinet
$T_1(a_1,a_2,w)$ is contained in $\T$. 
But then $\T$ satisfies (H2)--(H7) for the
support $S\cup\{w\}$ of the tuple 
$H'=(w,a_1,\dots, a_p:\emptyset: z)$. In view of (H8), this is
impossible. Thus, $V(P)=Y$, as required. 

We now show that
$(v_{1,2},b_{1,2})\in A(C_{1,2})$. Suppose this is not the case and there
exists some $u\in C_{1,2}-V(P)$. Without loss of generality we may assume
$(v_{1,2},u)\in A(C_{1,2})$. Then there exists a leaf $w\in L(N)-S$ such 
that $u$ is the last vertex in $C_{1,2}$ on any path from $\rho_N$ to $w$.
But then the trinet on $\{w, a_1, z\}$ is not as specified in (H5)
which is impossible. Thus, $(v_{1,2},b_{1,2})\in A(C_{1,2})$, as required.
It follows that $H$ must be 
a cactus of $N$ (with support $S$) in this case, too.

To see that $Tr(N)|_S=\T$, let $N'\in Tr(N)|_S$. Then 
$L(N')\cap S\not =\emptyset$. By distinguishing the
cases that $|L(N')\cap S| =1,2$, or $3$, it is straight forward
to show that $N'\in \T$ using Properties (H2)--(H8). 
Also $\T\subseteq Tr(N)|_S$ holds by Property (H1).

It remains to show that if $H$ is a cactus of $N$ with support $S$ 
and $S\not=X$ then
$H$ is isolated if and only if $\T$ satisfies (H9).  Assume that $H$
is a cactus  of $N$ with support $S$ and that $S\not=X$. Then it is 
straight forward to check that $\T$ satisfies (H9).  

Conversely, 
assume that $\T$ satisfies (H9).
We need to show that $outdegree(v_H)=2$ and that $indegree(v_H)=1$. 
We again prove the case $q=0$ and remark that
the arguments for $q\geq 1$ are similar.
Since $H$ is a cactus of $N$ we clearly have
$outdegree(v_H)\geq2$. Assume for contradiction that
$outdegree(v_H)>2$. Then since $N$ is 1-nested and $X\not=S$
there must exist some $w\in X-S$ that is reachable from
$v_H$ without crossing a vertex in $C_H-\{v_H\}$, where 
$C_H$ is the cycle in $\underline{N}$
corresponding to $H$. But then there exists some $x\in A$ 
such that $N_8(z,x,w)$ or $N_{12}(w,x,z)$ is 
contained in $\T$ contradicting (H9). Thus, $outdegree(v_H)=2$,
as required. But then  $indegree(v_H)\geq 1$
as $S\not=X$.
Assume for contradiction that $indegree(v_H)>1$.
Then since $S\not= X$ there must exist some $w\in X-S$ such that
$N_{11}(z,a,w)\in \T$ for some $a\in A$ contradicting again (H9).
Thus, $indegree(v_H)=1$. This completes the 
proof of Lemma~\ref{cactus}.
\end{proof}

Now, let $R_{z:S}$ ($R_{z:S}^{-1}$) 
denote any of the four reductions $C$, 
$\overline{C}$, $H$, $\overline{H}$
with $z$ and $S$ as specified in the definition
of the reductions. Then
it is straight-forward to check that if $N$ is a
1-nested network on $X$, $|X|\geq 3$, and 
$$
\T_{z:S} = \{ N' \in Tr(N) \,:\, S \cap L(N') \neq 
\emptyset \mbox{ and } S \cap L(N') \neq \{z\} \},
$$ 
then
\begin{equation}\label{identity}
Tr(N) = Tr(R_{z:S}(N))  \amalg \T_{z:S},
\end{equation}
or, in other words, $Tr(R_{z:S}(N)) = Tr(N) - \T_{z:S}$.

\begin{theorem} \label{encode}
Suppose that $N$ and $N'$ are both 1-nested networks on $X$, $|X| \ge 3$.
Then $Tr(N) = Tr(N')$ if and only if $N$ is isomorphic to $N'$.
\end{theorem}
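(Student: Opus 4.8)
The plan is to prove Theorem~\ref{encode} by induction on $|X|$, using the reduction/expansion machinery from Section~\ref{cherries-catuses-etc} together with the trinet-based characterizations of cherries and cactuses (Lemmas~\ref{fatcherry} and~\ref{cactus}). The forward direction (isomorphic networks display the same trinets) is immediate from the fact that $Tr(\cdot)$ is defined purely in terms of the isomorphism type of the network, so the real content is the converse: $Tr(N) = Tr(N')$ implies $N \cong N'$. The base case $|X| = 3$ follows from Lemma~\ref{1-nested-implies trinets}, since then $Tr(N)$ is a single trinet on $X$ and that trinet \emph{is} $N$ (up to isomorphism), so $Tr(N) = Tr(N')$ forces $N \cong N'$ directly.

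For the inductive step, assume the result for all leaf sets of size less than $|X|$, and suppose $Tr(N) = Tr(N')$ with $|X| \ge 4$. The key idea is that $Tr(N)$ alone determines a reduction that can be applied to \emph{both} $N$ and $N'$. Concretely, I would first locate a cherry or cactus of $N$ using Proposition~\ref{cherryorcactus}, which guarantees at least one of the reductions $C, \overline{C}, H, \overline{H}$ applies. Lemmas~\ref{fatcherry} and~\ref{cactus} show that the defining data of this cherry or cactus, namely the support $S$, the distinguished leaf $z$, and (for cactuses) the ordered tuple structure and the isolated/non-isolated status, are all detectable purely from the trinet set $Tr(N)|_S$ via the conditions (C1)--(C5) and (H1)--(H9). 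Since $Tr(N) = Tr(N')$, the identical subset $Tr(N')|_S$ satisfies exactly the same conditions, so the \emph{same} reduction $R_{z:S}$ (with the same $z$, $S$, and reduction type) applies to $N'$ as well, yielding cherries/cactuses of the same kind in both networks.

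Now apply that common reduction to both networks. Identity~(\ref{identity}) gives
\begin{equation*}
Tr(R_{z:S}(N)) = Tr(N) - \T_{z:S} = Tr(N') - \T_{z:S} = Tr(R_{z:S}(N')),
\end{equation*}
where the middle equality holds because $\T_{z:S}$ is defined entirely in terms of $S$ and $z$, which are shared. Both $R_{z:S}(N)$ and $R_{z:S}(N')$ are 1-nested networks on $X - (S - \{z\})$, a set with fewer leaves than $X$, so the induction hypothesis yields $R_{z:S}(N) \cong R_{z:S}(N')$. I would then need to check that this reduced set still has at least $3$ leaves so the hypothesis applies; the small residual cases (reduced leaf set of size $1$ or $2$) must be handled separately using the explicit description of 1-nested networks on one or two leaves noted after the definition of 1-nested. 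Finally, apply the corresponding expansion $R_{z:S}^{-1}$ to both sides: Lemma~\ref{isomorphic} states that applying the same reduction or expansion to isomorphic 1-nested networks produces isomorphic results, and since expansions invert reductions, we recover $N \cong R_{z:S}^{-1}(R_{z:S}(N)) \cong R_{z:S}^{-1}(R_{z:S}(N')) \cong N'$, completing the induction.

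The main obstacle I anticipate is the step asserting that the \emph{same} reduction applies to $N'$ with \emph{identical} parameters $(z, S, \text{type})$. This requires that Lemmas~\ref{fatcherry} and~\ref{cactus} characterize not merely the existence of some cherry/cactus but pin down the full combinatorial data unambiguously from $Tr(N)$ alone, including the maximality conditions (C4) and (H8) that fix $S$, the isolation conditions (C5) and (H9), and—for cactuses—the ordering of the tuple $(a_1,\dots,a_p:b_1,\dots,b_q:z)$ and the choice of $z$. One must verify that whenever such data is extractable from $Tr(N)$, it is genuinely determined (so that $N'$ cannot admit a \emph{different} valid reduction on the same trinet set), which is where the careful bookkeeping in those two lemmas does the essential work; I would lean on them as black boxes but would want to confirm that the reduction type and $z$ are canonically recoverable rather than merely existent.
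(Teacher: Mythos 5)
Your proposal follows essentially the same route as the paper's proof: induction on $|X|$, locating a reduction via Proposition~\ref{cherryorcactus}, transferring it to $N'$ via Lemmas~\ref{fatcherry} and~\ref{cactus}, applying~(\ref{identity}) and the induction hypothesis, and concluding with Lemma~\ref{isomorphic}. The concerns you flag (small residual leaf sets and the canonicity of the recovered reduction data) are legitimate points that the paper also leaves largely to the cited lemmas, so your argument is correct and matches the paper's.
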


\begin{proof}
Suppose first that $N$ is isomorphic to $N'$. 
Then $Tr(N) = Tr(N')$ follows 
immediately by using induction on $|X|$, 
Lemma~\ref{isomorphic} and (\ref{identity}).

To prove the converse we also use induction on $|X|$. 
If $|X|=  3$, then the converse obviously holds. 
So, suppose that, for all $1 \le |X|\le m$, $m \ge 3$, 
if $Tr(N) = Tr(N')$ then $N$ is isomorphic to $N'$. 

Let $|X|=m+1$, and suppose that $N$ and $N'$ are 1-nested 
networks on $X$ with  $Tr(N) = Tr(N')$. By 
Proposition~\ref{cherryorcactus} 
we can apply at least one of the reductions 
$R = \overline{C}, C, H, \overline{H}$ to $N$.
Therefore, since $Tr(N)=Tr(N')$, by 
Lemmas~\ref{fatcherry} and \ref{cactus}, 
we may also apply the same reduction $R$ to $N'$. 
Moreover, by (\ref{identity}) we have $Tr(R(N)) = Tr(R(N'))$. 
So, by induction, $R(N)$ is isomorphic to $R(N')$.
Therefore, by Lemma~\ref{isomorphic}, $R^{-1}(R(N))$ 
is isomorphic to $R^{-1}(R(N'))$, i.e. $N$ is 
isomorphic to $N'$, as required. 
\end{proof}

There has been some interest in 
the literature in defining metrics on
networks \cite[page 172]{HRS11}, and
various metrics have been defined for different types
of phylogenetic networks including 1-nested networks 
\cite{CRV08a,CLRV09a,CLRV09b,CLR11,CRV08b,GH11}. 
Thus the following result could
be of interest. For $X$ with $|X|\ge 3$, let ${\mathcal N}_1(X)$ 
denote the set of 1-nested networks on $X$. In addition, 
define the map
$$
d: {\mathcal N}_1(X) \times {\mathcal N}_1(X) \to \R; 
(N,N')\mapsto d(N,N') := |Tr(N) \Delta Tr(N')|,
$$
for all $N,N'\in {\mathcal N}_1(X)$.
Then the last theorem immediately implies:

\begin{corollary} \label{metric}
For $X$ with $|X|\ge 3$,  the map $d$ is a (proper)
metric on ${\mathcal N}_1(X)$.
\end{corollary}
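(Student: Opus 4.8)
The plan is to show that $d$ satisfies the three defining properties of a metric: non-negativity together with the identity of indiscernibles, symmetry, and the triangle inequality. Since $d$ is defined via the symmetric difference of the trinet sets $Tr(N)$ and $Tr(N')$, each property will reduce to a standard fact about the symmetric difference operator $\Delta$ on finite sets, combined with the encoding result just established.

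First I would observe that $d(N,N')=|Tr(N)\Delta Tr(N')|\ge 0$ automatically, since it is the cardinality of a finite set. For the identity of indiscernibles, note that $d(N,N')=0$ holds if and only if $Tr(N)\Delta Tr(N')=\emptyset$, which is equivalent to $Tr(N)=Tr(N')$. By Theorem~\ref{encode}, this last equality holds if and only if $N$ is isomorphic to $N'$; and since elements of ${\mathcal N}_1(X)$ are identified up to isomorphism (recall that, with $L(N)=X$, isomorphism fixes $X$ pointwise), this means precisely that $N=N'$ as points of ${\mathcal N}_1(X)$. This is the one place where Theorem~\ref{encode} is genuinely needed, and I expect it to be the crux of the argument: without encoding, distinct networks could share the same trinet set and $d$ would merely be a pseudometric.

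For symmetry, I would simply use that the symmetric difference is a symmetric operation, so $Tr(N)\Delta Tr(N')=Tr(N')\Delta Tr(N)$ and hence $d(N,N')=d(N',N)$ for all $N,N'\in{\mathcal N}_1(X)$. For the triangle inequality, the key set-theoretic fact is the inclusion
$$
A\,\Delta\,C\ \subseteq\ (A\,\Delta\,B)\cup(B\,\Delta\,C)
$$
valid for arbitrary sets $A,B,C$. Applying this with $A=Tr(N)$, $B=Tr(N'')$, and $C=Tr(N')$ and taking cardinalities gives
$$
d(N,N')=|A\,\Delta\,C|\le|(A\,\Delta\,B)\cup(B\,\Delta\,C)|\le|A\,\Delta\,B|+|B\,\Delta\,C|=d(N,N'')+d(N'',N'),
$$
for all $N,N',N''\in{\mathcal N}_1(X)$, as required.

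The routine verifications of symmetry and the triangle inequality are immediate consequences of the corresponding properties of $\Delta$; the only substantive content is the separation of points, which is exactly where the encoding theorem does its work. Collecting these three properties establishes that $d$ is a metric, completing the proof.
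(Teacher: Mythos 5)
Your proof is correct and follows exactly the route the paper intends: the paper simply asserts that the corollary follows immediately from Theorem~\ref{encode}, and your write-up fills in the routine symmetric-difference verifications together with the one substantive step, namely that $d(N,N')=0$ forces $Tr(N)=Tr(N')$ and hence $N\cong N'$ by the encoding theorem. No discrepancies to report.
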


Note that the metric $d$ can be efficiently computed
since, for $N \in {\mathcal N}_1$, it is 
possible to compute every trinet in $Tr(N)$ 
efficiently (essentially because for any $Y \in {X \choose 3}$
the vertex $v(Y)$ can be computed efficiently using, e.g.
the algorithm presented in \cite{LT79}).

\section{Constructing 1-nested networks from dense sets of 
trinets}\label{1-nested-from-dense}

In this section, we present an efficient algorithm 
which, given a dense set $\T$
of trinets, can decide whether or not it is displayed
by a 1-nested network, and if this is the case, 
constructs the network displaying $\T$
(see 
Fig.~\ref{algorithm:multicons}).

We begin by describing efficient algorithms for detecting 
cherries and cactuses.
Given a dense set $\T$ of trinets on $X$, we say 
that $S \subseteq X$, $|S|\ge 2$ 
is a {\em cherry of $\T$} if the set $\T|_S$ 
satisfies conditions (C2)--(C4)
(note that it necessarily satisfies (C1)), and
that it is {\em isolated} if it also satisfies (C5). We now show that 
cherries can be found in polynomial time in a dense set 
of trinets using the algorithm presented in Fig.~\ref{algorithm:cherry}.

\begin{lemma}\label{polyfind1}
Given a dense set $\T$  of trinets on $X$, $|X| \ge 3$, 
algorithm \textsc{FindCherry} is correct and has run-time that is 
polynomial in $|X|$.
\end{lemma}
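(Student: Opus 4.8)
The plan is to make correctness rest entirely on Lemma~\ref{fatcherry}, which already converts the structural notion of a cherry of a $1$-nested network into the purely combinatorial conditions (C2)--(C4) on $\T$ (with (C5) detecting isolation). Thus proving \textsc{FindCherry} correct reduces to showing that it outputs exactly those $S\subseteq X$ with $|S|\ge2$ for which $\T|_S$ satisfies (C2)--(C4), and flags such an $S$ as isolated precisely when (C5) also holds. Each of (C2), (C3) and (C5) is a \emph{local} condition, verifiable by looking up a prescribed list of trinets in $\T$ indexed by pairs or triples of $X$. The only genuinely non-local requirement is the maximality condition (C4), which quantifies over all proper supersets $S'\supset S$; the heart of the argument is therefore to show that (C4) can nevertheless be checked locally.

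First I would isolate the key combinatorial fact, which I will call the \emph{augmentation property}: if $S$ satisfies (C2) and (C3), and some $S'\supsetneq S$ also satisfies (C2) and (C3), then for every $r\in S'-S$ the set $S\cup\{r\}$ already satisfies (C2) and (C3). The verification is a short case analysis. For (C3), the only new triples of $S\cup\{r\}$ have the form $\{x,y,r\}$ with $x,y\in S$; since $x,y,r\in S'$, condition (C3) for $S'$ delivers $T_2(x,y,r)\in\T$. For (C2), a pair inside $S$ is handled by (C2) for $S$, since $X-(S\cup\{r\})\subseteq X-S$; for a pair $\{x,r\}$ with $x\in S$ and $z\in X-(S\cup\{r\})$, the subtle point is that $z$ may lie in $S'$, in which case (C2) for $S'$ does not apply, but then $\{x,r,z\}\subseteq S'$ and (C3) for $S'$ yields $T_2(x,r,z)\in\T$, which is one of the six trinets permitted by (C2). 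Granting this, a set satisfying (C2) and (C3) violates (C4) \emph{if and only if} some single element $r\in X-S$ can be adjoined while preserving (C2) and (C3); maximality is thus decidable by $O(|X|)$ single-element tests instead of an exponential search.

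With the augmentation property in hand I would describe \textsc{FindCherry} as growing a candidate set greedily from a seed pair: beginning with a suitable $\{x,y\}$, repeatedly adjoin any $r\in X-S$ for which $S\cup\{r\}$ still satisfies (C2) and (C3), halting when none remains. By the augmentation property the terminal set is genuinely maximal, hence satisfies (C4) and is a cherry of $\T$, so everything the procedure outputs is legitimate (soundness). Conversely, if $S$ is a cherry then, applying the augmentation property with $S'=S$, every element of $S$ not yet included is addable, so greedy growth from any pair inside $S$ recovers all of $S$; and growth cannot \emph{leak} out of $S$, because for $r\notin S$ the displayed trinet on $\{x,y,r\}$ with $x,y\in S$ is the resolved tree $T_1$ (cherry $\{x,y\}$) rather than the star $T_2$, so $T_2(x,y,r)\notin\T$ and $r$ is never addable. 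This gives completeness and stopping exactly at $S$ by (C4). Isolation is then read off by a single pass checking (C5).

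For the run-time, each test of (C2) or (C5) scans the $O(|X|^2)$ pairs of a candidate set against $X-S$, and each test of (C3) scans its $O(|X|^3)$ triples, every individual check being a constant-time membership query into $\T$ (of size ${|X|\choose3}$); since each element is adjoined at most once along a growth, and there are polynomially many seeds, the total cost is polynomial in $|X|$. The main obstacle I expect is the augmentation property together with the leakage-control argument: both hinge on the fine distinction between $T_2$ and the remaining thirteen trinets of Fig.~\ref{trinets}, and it is precisely the clause that lets (C3) substitute for (C2) when the third leaf falls inside the larger set that makes the global maximality condition (C4) collapse to a local, efficiently testable one.
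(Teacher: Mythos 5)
Your proposal is correct and takes essentially the same route as the paper's proof: both reduce correctness to the combinatorial conditions of Lemma~\ref{fatcherry} and argue that the greedy, single-element growth loop already guarantees the global maximality condition (C4). In fact you supply the augmentation/heredity property that the paper only asserts (``in view of the while loop (lines 6--10) $\T|_S$ must satisfy (C4)''), which is the substantive point; the only slight blemish is your ``leakage'' argument, which appeals to a network displaying $\T$ even though $\T$ is an arbitrary dense trinet set --- but that step is dispensable, since soundness alone ensures that whatever maximal set the loop terminates with is a cherry of $\T$.
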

\begin{proof}
It is straight-forward to see that algorithm \textsc{FindCherry}
has run-time that is polynomial in $|X|$.

To see that algorithm \textsc{FindCherry} is correct, first note that it will 
clearly terminate. Now, suppose that the algorithm outputs a 
(non-empty) set $S$. Then, in view of line 7, $\T|_S$ must
satisfy (C2) and (C3). Moreover, in view of the while loop (lines 6--10)
$\T|_S$ must satisfy (C4), So $S$ must be a cherry of $\T$. 
Moreover, if the output indicates that $S$ is isolated (i.e. that 
$S\not=X$ and that $\T|_S$ satisfies (C5)), then this must 
be the case in view of line 8.

Now, suppose that algorithm \textsc{FindCherry} 
outputs ``No cherry of $\T$ exists", 
and that, for the purposes of contradiction, a cherry $S$ of $\T$ 
does exist. Then, as any cherry has cardinality at least 2, if a cherry
exists then at some stage the while loop in lines 2--12 must 
encounter some $\{x,y \} \in {X \choose 2}$ with $\{x,y\}\subseteq S$.
Clearly, the algorithm will then have to output $S$, a contradiction.
Thus the algorithm \textsc{FindCherry} is correct.
\end{proof}

\begin{figure}[h]
\centering
\parbox{0cm}{\begin{tabbing}
XXX\= XX\= XX\=  XX\= XX\= XX\= XX\=  XXXXX\=  XXXXXXXXXXX\= \kill \\
{\large \textsc{FindCherry}($X$,\(\mathcal{T}\))}\\
\rule{\columnwidth}{0.5pt}\\
Input: \> \> \> A set $X$, $|X|\ge 3$, and a dense set $\T$ of trinets on $X$. \\
Output: \> \> \>A cherry $S$ of $\T$, and a boolean variable
$I \in \{\rm{T},\rm{F}\}$, with $I=\rm{T}$ \\
\>\>\> if $S$ is isolated and $I=\rm{F}$ else, or the statement 
``No cherry of $\T$ \\
\>\>\>exists".\\
\rule{\columnwidth}{0.5pt}\\
1. \> Let $S = \emptyset$, $I=\rm{F}$, $G = {X \choose 2}$.\\
2. \> While there is some $\{x,y\} \in G$ do \\
3. \>  \> If $T_1(x,y,z)$, $T_2(x,y,z)$, $N_3(z,x,y)$, $N_4(x,y,z)$, $N_9(x,y,z)$\\ 
4. \>  \> or $N_{10}(z,x,y)$ is contained in $\T$ for all $z \in X -\{x,y\}$ then do\\
5. \>  \> Let $S=\{x,y\}$, $G = \emptyset$ and $U=X-\{x,y\}$.\\
6. \>  \> While there is some $u \in U$ do \\
7. \>  \> \> If $\T|_{S \cup \{u\}}$ satisfies (C2) and (C3), then let $S = S  \cup \{u\}$.\\
8. \> \> \>   If $U = \{u\}$, $S\not=X$, and $S$ satisfies (C5), then let $I=\rm{T}$.\\
9. \> \> \> Let  $U=U-\{u\}$. \\
10. \> \> end ``do (line 6)"\\
11. \> \> else let $G = G - \{\{x,y\}\}$.\\
12. \> end ``do (line 2)"\\
13. \> If $S=\emptyset$ then output ``No cherry of $\T$ exists" else output $S$ and $I$.\\
\end{tabbing}}
\caption{Pseudo-code 
for an algorithm that either finds a cherry of a dense trinet set $\T$ 
and also checks whether it is isolated
or not or determines that no cherry of $\T$  exists.}
\label{algorithm:cherry}
\end{figure}

Now, given a dense trinet set $\T$ on $X$,
we say that a tuple $H= (a_1,a_2,\dots,a_p:b_1,b_2,\dots,b_q:z)$ of
distinct elements of $X$, $p \ge 1$, $q \ge 0$ is a {\em cactus of $\T$} 
(with support $S = A \cup B \cup \{z\}$, $A=\{a_1,\dots,a_p\}$ and
$B = \{b_1,\dots,b_q\}$)
if $\T|_S$ satisfies conditions (H2)--(H8) of Lemma~\ref{cactus}
(note that $\T|_S$ necessarily satisfies (H1)). Moreover, 
such an $H$ is {\em isolated} if  $S\not=X$ and $\T|_S$ also satisfies 
condition (H9) of Lemma~\ref{cactus}.

Note that if $H= (a_1,a_2,\dots,a_p:b_1,b_2,\dots,b_q:z)$ 
is a cactus of $\T$, then
the relation $\sim_{\T}$ defined on the set 
$Y=S-\{z\}=A \cup B$ by putting $y \sim_{\T} y'$ if and only if $y=y'$ 
or $N_2(z,y,y')$ or $N_2(z,y',y) \in \T$, for all $y,y' \in Y$, 
is an equivalence relation on $Y$ 
with (at most two) equivalence classes $A,B$. Moreover,  
the relation $<_{\T}$ defined on $Y$ by $y <_{\T} y'$ 
if and only if 
$N_2(z,y,y')  \in \T$, for all $y,y' \in Y$, is a 
strict partial order on $Y$, 
which restricts to a strict 
linear order on $A$ and also on $B$.

Using these observations, we now show that the 
algorithm presented in Fig.~\ref{algorithm:cactus} 
can be used to detect cactuses in 
a dense set of trinets in polynomial time.

\begin{lemma}\label{polyfind2}
Given a dense set $\T$  of trinets on $X$, $|X| \ge 3$, 
algorithm \textsc{FindCactus} is correct and has 
run-time that is polynomial in $|X|$.
\end{lemma}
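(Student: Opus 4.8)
The plan is to prove Lemma~\ref{polyfind2} by following the same two-pronged strategy used for the analogous cherry result (Lemma~\ref{polyfind1}): establish the polynomial run-time bound by inspecting the structure of algorithm \textsc{FindCactus}, and then establish correctness by arguing that every output the algorithm produces genuinely is a cactus of $\T$ (soundness), and that if a cactus of $\T$ exists then the algorithm will find one (completeness). Throughout I would lean on Lemma~\ref{cactus}, which reduces the geometric notion of a cactus to the purely combinatorial conditions (H2)--(H9) that can be checked directly against the dense trinet set $\T$.

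For the run-time bound, I would observe that every relevant quantity is bounded by a fixed power of $|X|$. The equivalence relation $\sim_{\T}$ and the strict partial order $<_{\T}$ described just before the statement are computed by inspecting, for each pair $\{y,y'\} \subseteq Y$, whether $N_2(z,y,y')$ or $N_2(z,y',y)$ lies in $\T$; since $\T$ is dense there are $\binom{|X|}{3}$ trinets, and each membership query and each condition (H2)--(H9) involves only triples of elements, so the total work is polynomial in $|X|$. I would then note that the outer search over candidate seeds (pairs or small tuples giving rise to a cycle via some $N_1$, $N_2$, $N_5$--$N_8$, $N_{11}$ or $N_{12}$ trinet), the inner loop that greedily enlarges the support while (H2)--(H7) remain satisfied, and the maximality check corresponding to (H8) each contribute only polynomially many iterations, each of polynomial cost.

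For correctness, the soundness direction is the routine one: if the algorithm outputs a tuple $H$ with support $S$, I would read off from the lines of the pseudocode that $\T|_S$ has been verified to satisfy (H2)--(H7), that the enlargement loop having terminated forces the maximality condition (H8), and that the isolation flag is set exactly when $S \neq X$ and (H9) holds; Lemma~\ref{cactus} then certifies that $H$ is indeed a (possibly isolated) cactus of $N$ whenever $\T$ is displayed by a $1$-nested network $N$. The completeness direction is where I expect the main obstacle to lie. Here one assumes a cactus $H$ of $\T$ with support $S$ exists and must argue that the algorithm cannot terminate with the ``No cactus'' output. The key point is that some seed detectable by the algorithm must eventually be encountered: by (H2), (H3) or (H5) at least one of the characteristic trinets built on elements of the true support $S$ lies in $\T$, so the outer loop will at some stage process a pair lying inside $S$. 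The delicate part is showing that, starting from such a seed, the greedy enlargement governed by (H2)--(H7) correctly recovers the equivalence classes $A$ and $B$ and a support satisfying (H8) — in particular, that the partial order $<_{\T}$ and the relation $\sim_{\T}$ extracted from $\T$ reconstruct the two linearly ordered sides of the cactus faithfully, so that the algorithm does not get stuck with a proper sub-support that already appears maximal. I would resolve this by invoking the structural facts recorded immediately before the lemma (that $\sim_{\T}$ is an equivalence relation with at most two classes and $<_{\T}$ restricts to a linear order on each), which guarantee that greedy enlargement is unambiguous and terminates precisely at the genuine maximal support, whence the algorithm outputs a cactus, contradicting the assumed failure. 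This completes the proof.
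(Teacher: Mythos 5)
Your overall strategy---soundness by reading the verified conditions off the pseudocode, completeness by showing that an existing cactus of $\T$ would necessarily be detected, and a routine polynomial bound---is the same as the paper's. However, your argument is pitched at an algorithm that is not \textsc{FindCactus}. There is no seed-pair search and no greedy enlargement loop in \textsc{FindCactus} (that is the structure of \textsc{FindCherry}): here the outer loop ranges over single candidate elements $z\in X$, tests whether $z$ is at the bottom of some trinet in $\T$, and then constructs the entire candidate support in one step as $S=Y\cup\{z\}$, where $Y$ is the set of all elements hanging off the side of some trinet having $z$ at the bottom; conditions (H2)--(H8), including the maximality condition (H8), are then checked explicitly on $\T|_S$ in line 12, not inferred from the termination of an enlargement process. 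Consequently the ``delicate part'' you identify---that greedy enlargement does not get stuck at a proper sub-support---addresses a non-issue, while the points that actually need arguing are: (a) if $K=(a_1,\dots,a_p:b_1,\dots,b_q:z)$ is a cactus of $\T$ then $z$ is at the bottom of some trinet in $\T$, so the loop reaches line 3 for this $z$; (b) every element of $A\cup B$ hangs off the side of some trinet with $z$ at the bottom, so it lands in $Y$; and (c) $\sim_{\T}$ and $<_{\T}$ on $Y$ recover exactly the two classes $A$ and $B$ and their linear orders, so that the tuple $F$ formed in lines 10--11 coincides with $K$ and passes the (H2)--(H8) test, contradicting a ``No cactus'' output. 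You do invoke the correct structural facts about $\sim_{\T}$ and $<_{\T}$ recorded before the lemma, so the repair is mechanical, but as written the completeness argument proves something about a different procedure.

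A smaller point: correctness here is the purely combinatorial statement that the output tuple is a cactus \emph{of $\T$}, i.e.\ that $\T|_S$ satisfies (H2)--(H8) (and (H9) exactly when the flag $I$ is set); this requires no appeal to Lemma~\ref{cactus} and no hypothesis that $\T$ is displayed by a $1$-nested network. Your detour through ``Lemma~\ref{cactus} certifies that $H$ is a cactus of $N$ whenever $\T$ is displayed by a $1$-nested network $N$'' conflates the role of Lemma~\ref{cactus} (which is invoked later, in the proof of Theorem~\ref{build}, to pass from cactuses of $\T$ to cactuses of a network) with the role of the present lemma.
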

\begin{proof}
First note that the algorithm will clearly terminate. 
Moreover, if it does output a tuple then in view of lines 12 and 13 this
must be a cactus of $\T$ and it will be isolated only if $I=\rm{T}$.
In addition, if the algorithm outputs ``No cactus of $\T$ exists", then
this must be the case. Otherwise, suppose there is some 
cactus $K=(a_1,\dots,a_p:b_1,\dots,b_q: z)$ of $\T$, $p\geq 1$, 
$q\geq 0$. Setting $A=\{a_1,\dots,a_p\}$ and $B=\{b_1,\dots,b_q\}$
it follows that $S=A\cup B\cup\{z\}$ 
is the support of $K$ and that $z$ must be at the bottom of 
some trinet in $\T$. Thus the while loop (lines 2--20)  would 
eventually find $z$ at line 3. Since $K$ is a cactus of $\T$, for 
each element $y\in Y:=A\cup B$,
 there exists some $N\in \T$ such that $y$ hangs off the
side of $N$ and $z$ is at the bottom of $N$. 
Moreover, $A$ and $B$ (in case $B\not=\emptyset$) are the 
equivalence classes of the  relation $\sim_{\mathcal T}$ defined on
$Y$ and the elements in $A$ and $B$
(again in case $B\not=\emptyset$) are strictly linearly ordered by
$<_{\mathcal T}$. Thus, the algorithm would form the
tuple $F=(a_1,\dots,a_p:b_1,\dots,b_q:z)$ (lines 10 and 11). 
Clearly, the support of $F$ is $S$. Since
$\T|_S$ satisfies (H2)--(H8) it follows that $F$ is returned by
the algorithm. However since $F=K$, this is impossible.
 
Finally, to see that algorithm 
\textsc{FindCactus} is polynomial in $|X|$, it 
is sufficient to note that lines 6--7, 8--9 and 12--13 
can all clearly be  executed in time that 
is polynomial in $|X|$.
\end{proof}

\begin{figure}[h]
\centering
\parbox{0cm}{\begin{tabbing}
XXX\= XX\= XX\=  XX\= XX\= XX\= XX\=  XXXXX\=  XXXXXXXXXXX\= \kill \\
{\large \textsc{FindCactus}($X$,\(\mathcal{T}\))}\\
\rule{\columnwidth}{0.5pt}\\
Input: \> \> \> A set $X$, $|X|\ge 3$, and a dense set $\T$ of 
trinets on $X$. \\
Output: \> \> \> A cactus $H$ of $\T$ and a boolean variable
$I \in \{\rm{T},\rm{F}\}$, with $I=\rm{T}$\\ 
 \> \> \> if $H$ is isolated and $I=\rm{F}$ else, or  
 the statement ``No cactus of $\T$ \\
 \>\>\> exists".\\
\rule{\columnwidth}{0.5pt}\\
1. \> Put $H=\emptyset$, $I=\rm{F}$, $G=X$.\\
2. \> While there is some $z \in G$ do\\
3. \>  \> If there is a trinet $N \in \T$ such that $z$ is at the 
bottom of $N$, then do\\
4. \>  \> \> Let $Y$ be the set of $y \in X-\{z\}$ such that $y$ 
hangs off the side of\\
5. \> \> \> some $N \in \T$ for which $z$ is at
the bottom of $N$. \\
6. \> \> \> If the relation $\sim_{\T}$ is an equivalence relation on $Y$ \\
7. \> \> \> that has at most two equivalence classes $E,E'$, then do\\
8.   \> \> \> \> If the relation $<_{\T}$ on $Y$ is a partial order on $Y$ 
that also restricts \\
9.   \> \> \> \> to give a strict linear order on $E$ 
and on $E'$ 
then do\\
10.   \> \> \> \> \> Let $F = (a_1,\dots,a_p:b_1,\dots,b_q:z)$ and 
$S=Y\cup\{z\}$, where \\
11. \> \> \> \> \> $E=\{a_1,\dots,a_p\}$ and $E'=\{b_1,\dots,b_q\}$ 
are ordered relative to $<_{\mathcal T}$.\\
12. \> \> \> \> \> If $\T|_S$ satisfies (H2)--(H8), then let $H=F$ and 
$G=\emptyset$ and, if \\
13. \> \> \> \> \> $\T|_S$ also satisfies (H9) then let $I=\rm{T}$, else 
let $G = G-\{z\}$.\\
14. \> \> \> \> end ``do (line 11)"\\
15.\> \> \> \> else let $G = G-\{z\}$.\\
16. \> \> \> end ``do (line 8)"\\
17. \> \> \> else let $G = G-\{z\}$.\\
18. \> \> end ``do (line 3)"\\
19. \> \> else let $G = G-\{z\}$.\\
20. \> end ``do (line 2)"\\
21. \> If $H=\emptyset$ then output ``No cactus of $\T$ exists" else 
output $H$ and $I$.\\
\end{tabbing}}
\caption{Pseudo-code for an algorithm that either finds a 
cactus of a dense trinet set $\T$ 
and also decides whether it is isolated or not or 
determines that no cactus of $\T$ exists.}
\label{algorithm:cactus}
\end{figure}

\begin{figure}[h!]
\centering
\parbox{0cm}{\begin{tabbing}
XXX\= XX\= XX\=  XX\= XX\= XX\= XX\=  XXXXX\=  XXXXXXXXXXX\= \kill \\
{\large \textsc{BuildNet}(\(\mathcal{T}\))}\\
\rule{\columnwidth}{0.5pt}\\
Input: \> \> \> A set $X$, $|X|\ge 3$, and a dense set $\T$ of 
trinets on $X$. \\
Output: \> \> \>A 1-nested network $N$ on $X$ with $Tr(N)=\T$, or \\
\> \> \> the statement ``There is no 1-nested network displaying $\T$".\\
\rule{\columnwidth}{0.5pt}\\
1. \> $Stack = \emptyset$, $G=X$ \\
2. \> While there is some cherry $S$ in $\T$ 
with $z\in S$ or some cactus \\
3. \> 
 $H=(a_1,\dots,a_p: b_1,\dots, b_q:z)$ with support 
$S=\{a_1,\dots,a_p, b_1,\dots, b_q,z\}$ \\
4.\>in $\T$
do\\
5. \>  \> Put the symbol $R_{z:S}$ on the top of $Stack$. \\
6. \> \>  If $|G-( S-\{z\})| \le 2$, then let $N$ be either the bush on 
$G$ or \\
7. \> \> the two-leafed network on $G$, depending on $\T$.\\
8. \>  \> Let $\T = \T - \T_{z:S}$, $G=G-(S-\{z\})$.  \\
9. \> end ``do (line 2)" \\
10. \> If  $|G| \ge 3$, then output  ``There is no 1-nested network 
displaying $\T$" \\
11. \>else do\\
12. \> \> While there is some $R_{z:S}$ on the top of $Stack$, do
$N = R^{-1}_{z:S}(N)$.  \\
13. \> \> Output $N$\\
13. \> end ``do (line 12)" \\
\end{tabbing}}
\caption{Pseudo-code for an algorithm to 
construct a 1-nested network from
a dense set of trinets, or decide that 
such a network does not exist.}
\label{algorithm:multicons}
\end{figure}

We now use the algorithms \textsc{FindCherry} 
and \textsc{FindCactus} to show that 
it can be decided in polynomial time whether
or not a dense set 
of trinets is displayed by 
a 1-nested network using the algorithm presented in 
Fig.~\ref{algorithm:multicons}.

\begin{theorem} \label{build}
For $X$ with $|X| \ge 3$ and $\T$ a dense
set of trinets on $X$, algorithm \textsc{BuildNet} has 
run-time that is polynomial in $|X|$ and is correct.
\end{theorem}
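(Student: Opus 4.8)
The plan is to prove the two assertions of Theorem~\ref{build} separately: first that \textsc{BuildNet} runs in time polynomial in $|X|$, and then that it is correct, meaning it outputs a $1$-nested network $N$ with $Tr(N)=\T$ precisely when $\T$ is displayed by some $1$-nested network, and otherwise correctly reports that no such network exists (uniqueness of the output up to isomorphism then being immediate from Theorem~\ref{encode}). For the running time I would argue that each pass through the main while loop (lines 2--9) deletes the leaves $S-\{z\}$ from $G$, and since every cherry and every cactus has $|S|\ge 2$, the size of $G$ strictly decreases at each pass; hence there are at most $|X|$ passes. Within a pass a cherry or cactus is located by \textsc{FindCherry} and \textsc{FindCactus}, which run in polynomial time by Lemmas~\ref{polyfind1} and~\ref{polyfind2}, while forming $\T_{z:S}$ and updating $\T$ and $G$ is clearly polynomial. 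The final expansion phase (line 12) applies at most $|X|$ expansions $R^{-1}_{z:S}$, each modifying only a bounded part of the network, so the total cost is polynomial in $|X|$.

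For correctness I would induct on $|X|$, using the identity (\ref{identity}) as the bridge between the algorithm's bookkeeping and the trinet sets of the intermediate networks. In the completeness direction, suppose $\T$ is displayed by a $1$-nested network; since $\T$ is dense this forces $\T=Tr(N_0)$ for some $1$-nested $N_0$. By Proposition~\ref{cherryorcactus} at least one of the four reductions may be applied to $N_0$, and by Lemmas~\ref{fatcherry} and~\ref{cactus} the corresponding cherry or cactus is detected combinatorially in $\T$, so the while loop finds a reduction $R_{z:S}$. By (\ref{identity}) the updated set $\T-\T_{z:S}$ equals $Tr(R_{z:S}(N_0))$, a dense trinet set displayed by a $1$-nested network on $|X|-(|S|-1)$ leaves, so the inductive hypothesis lets the process descend correctly to the base case $|G|\le 2$, where lines 6--7 record the bush or two-leafed network determined by $\T$. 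Reading the stack backwards and applying the expansions, which are inverse to the reductions, reconstructs $N_0$ up to isomorphism by Lemma~\ref{isomorphic}, so the output satisfies $Tr(N)=\T$.

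Conversely, whenever \textsc{BuildNet} emits a network $N$ I would verify $Tr(N)=\T$ by running the recorded reductions on $N$ in order: since reduction and expansion are mutual inverses this returns the base network, and repeated use of (\ref{identity}) shows $Tr(N)$ is the disjoint union of $Tr$(base) with the successively removed sets $\T_{z:S}$, which is exactly $\T$. If instead the loop halts with $|G|\ge 3$, then no combinatorial cherry or cactus exists in the current $\T$; were $\T$ displayable by a $1$-nested network, Proposition~\ref{cherryorcactus} together with Lemmas~\ref{fatcherry} and~\ref{cactus} would supply one, so the algorithm correctly reports that no $1$-nested network displays $\T$.

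The main obstacle I anticipate is making the two-way correspondence between the purely combinatorial cherries and cactuses of $\T$ (defined via conditions (C2)--(C5) and (H2)--(H9)) and genuine cherry and cactus structures of a displaying network completely tight, so that the trinet set $\T_{z:S}$ deleted at each reduction is exactly the one the matching expansion regenerates; this is precisely what lets (\ref{identity}) be applied in both directions along the induction. A secondary delicate point is the base case: once $|G|\le 2$ there are too few leaves to carry a trinet, so the choice between a bush and a two-leafed network on the final vertices must be read off from the trinets of the original dense set that straddle these vertices, and one must check that this reading is both forced and consistent.
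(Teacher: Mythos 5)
Your proposal is correct and follows essentially the same route as the paper: polynomial running time from Lemmas~\ref{polyfind1} and~\ref{polyfind2}, the ``no network'' case from Proposition~\ref{cherryorcactus} together with Lemmas~\ref{fatcherry} and~\ref{cactus}, and $Tr(N)=\T$ for the output via the identity~(\ref{identity}) and the reduction/expansion correspondence. The paper's own proof is in fact considerably terser than yours and silently glosses over the very point you flag as the main obstacle (that the sets $\T_{z:S}$ deleted from the input coincide with the trinets regenerated by the matching expansions), so your added induction and explicit bookkeeping are a faithful elaboration rather than a different argument.
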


\begin{proof}
Algorithm \textsc{BuildNet} has run-time that is polynomial in $|X|$ 
since the check required in line 2 can be executed
in time that is polynomial in $|X|$ by 
Lemmas~\ref{polyfind1} and \ref{polyfind2}

Now, if  algorithm \textsc{BuildNet} outputs 
``There is no 1-nested network displaying $\T$", then 
by Proposition~\ref{cherryorcactus}, 
Lemma~\ref{fatcherry} and Lemma~\ref{cactus},
there is no 1-nested network $N$ on $X$ with $Tr(N)=\T$.
Moreover, if \textsc{BuildNet} outputs a network $N$, then 
$N$ is clearly 1-nested, and $Tr(N)=\T$ by (\ref{identity}). 
This completes the proof.
\end{proof}

\begin{remark}
Although we have shown that  algorithm \textsc{BuildNet} 
has run-time that is polynomial in $|X|$, it could be of 
interest to see if faster, more sophisticated algorithms
can be developed.
\end{remark}

\section{Discussion}\label{discussion}

In this paper, we have shown that we can recover a 1-nested network from 
`perfect data', viz. the dense set of 1-nested trinets
that is displayed by the network.
In practice, we will not 
usually have access to such information for  
biological datasets. Even so, it should be quite 
straight-forward to at least 
compute a dense set of trinets for 
any given biological dataset using existing phylogenetic
network methods.
For example, given a multiple sequence alignment, one could compute the
most parsimonious or most likely trinet for
every sub-alignment of 3 sequences (using, e.g. methods
described in \cite{JNST06,JNST09}), 
which would be feasible as there are a bounded number of 
1-nested trinets. Note that this would have the advantage that no
`breakpoints' would need to be computed for the
multiple alignment, which is 
a first (and sometimes quite difficult) step 
that is usually required when constructing phylogenetic 
networks from phylogenetic trees (cf. e.g. 
\cite[Chapter 11]{HRS11}, \cite[Section 2]{N11}).

Given that computing dense sets of trinets is feasible
for biological data, 
it could be reasonable to develop methods for finding
1-nested networks displaying as
many trinets as possible from a dense set of trinets.
Similar techniques have been developed 
for triplets e.g. \cite{HIKS11,IKKSHB09,TEPA11},
although it is worth noting that it is NP-hard to 
find a tree displaying a maximum number of rooted triplets from 
an arbitrary set of triplets \cite{B97,J01,W04}
(even if the set is dense \cite{BGJ10}).
Alternatively, it might be of interest to 
investigate if there might be  
an `Aho-type' algorithm  \cite{ASSU81} to determine if
an arbitrary subset of 1-nested trinets encodes a 1-nested network,
and, if so, adapt this to give `Min-Cut' type algorithms 
for building 1-nested networks from sets of trinets
(cf. \cite{P02,SS00,SS03}).
A first step in this direction could be to determine whether or not 
it is an NP-complete problem to decide if
an arbitrary subset of 1-nested trinets encodes a 1-nested network
(in particular, note that there are non-dense sets 
of 1-nested trinets that 
encode 1-nested networks -- e.g. the 1-nested network $N$ on $\{w,x,y,z\}$
pictured in  Fig.~\ref{encode-cactus}(a)
is the only 1-nested
network on $\{w,x,y,z\}$ displaying the two trinets presented 
in Fig.~\ref{encode-cactus}(b)). 
\begin{figure}[t] \centering
\includegraphics[scale=0.5]{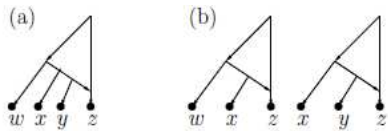}
\caption{The 1-nested network $N$ on
$\{w,x,y,z\}$ depicted in (a)
is uniquely determined by the two trinets  
pictured in (b). As before, directions 
are omitted for clarity when clear. Also only the vertices that
are leaves are marked by a dot.
 }
\label{encode-cactus}
\end{figure}

In another direction, clearly 
we can ask for results along the lines of 
those presented above for {\em level-$k$ networks} \cite{CJ05}, $k \ge 2$, 
phylogenetic networks that have a bounded level of complexity 
depending on $k$ (and also, of course,  `$k$-nested' networks). 
Note that there are non-recoverable level-2
networks (e.g. Fig.~\ref{not-recoverable}), and 
so this could be rather more technical.
Moreover, it should be noted that, for $k \ge 3$,
there are level-$k$ networks that are not
of level-$(k-1)$ all of whose trinets have 
fixed level (see Fig.~\ref{not-level-preserving}).
Thus, the levels of the trinets 
displayed by a network do not necessarily 
determine the level of a network. For
practical purposes, it might also be of interest to 
determine a way to enumerate the level-$k$ trinets, $k\ge 2$.
\begin{figure}[t] \centering
\includegraphics[scale=0.5]{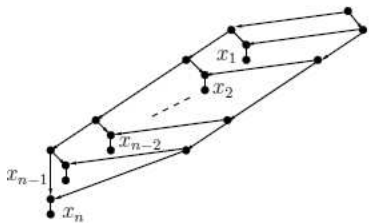}
\caption{A level-$n$
phylogenetic network $N$ on $\{x_1,x_2,\dots, x_n\}$, $n\geq 4$, 
for which every trinet in $Tr(N)$ is of level-$3$. For clarity arc directions
are omitted when clear.}
\label{not-level-preserving}
\end{figure}

Another avenue worth exploring, could be to try generalizing
the above results to `$r$-nets', $r \ge 4$, i.e.
phylogenetic networks with $r$-leaves
(note that in case $r=4$ quartet trees are commonly used 
to build phylogenetic trees, e.g. \cite{SH96}). Note that it is
straight-forward to extend Definition~\ref{tripdisp} to  
obtain a set of $r$-nets displayed by a phylogenetic network.
This could be quite useful in practice since
it might be possible to obtain more accurate estimates 
for $r$-nets than trinets (at least for 
$r=4$) before we try to piece them together, although,
technically speaking, this could be very challenging.

Finally, we conclude with what we consider to be a 
rather bold conjecture:

\begin{conjecture}
If $N$ is a recoverable phylogenetic network on $X$, then $Tr(N)$
encodes $N$, that is, if $N'$ a recoverable phylogenetic network on $X$ 
such that $Tr(N)=Tr(N')$ then $N$ is isomorphic to $N'$.
\end{conjecture}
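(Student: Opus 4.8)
The plan is to imitate the inductive, reduction-based proof of Theorem~\ref{encode}, replacing the two families of local structures (cherries and cactuses) by a richer repertoire of local operations adapted to arbitrary recoverable networks, and to run the same induction on $|X|$. The base case $|X|=3$ should be immediate: for a recoverable network on three leaves $v(X)=\rho_N$, so the single displayed trinet $N_X$ equals $N$ (a phylogenetic network has no suppressible vertices of in- and out-degree one), whence $Tr(N)=Tr(N')$ forces $N\cong N'$. For the inductive step I would need three ingredients mirroring Proposition~\ref{cherryorcactus}, Lemmas~\ref{fatcherry}--\ref{cactus}, and identity~(\ref{identity}): (a) a collection of reductions, each strictly decreasing the leaf set, such that at least one applies to every recoverable network with $|X|\ge 2$, together with inverse expansions; (b) a purely trinet-theoretic characterisation of when a given reduction applies, so that once $Tr(N)=Tr(N')$ is known the same reduction can be applied to $N'$; and (c) a decomposition $Tr(N)=Tr(R_{z:S}(N))\amalg\T_{z:S}$ guaranteeing that reduction commutes with the passage to trinets.

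The natural organising principle is the decomposition of a recoverable network into its biconnected components (``blobs'') joined along cut arcs, i.e.\ its tree of blobs. First I would recover this tree-of-blobs structure from $\T=Tr(N)$ alone: the ``cut-arc sets'', namely the maximal subsets of leaves lying below a common cut arc, should be reconstructible from how the trinets group leaves (this is the analogue for networks of the SN-set technology used in triplet-based reconstruction, and of the lowest-stable-ancestor computation of \cite{LT79} already invoked after Corollary~\ref{metric}). Having the nesting of cut-arc sets, I would locate a \emph{lowest nontrivial blob} $B$, all of whose pendant cut arcs lead either to single leaves or to substructures already collapsed to single leaves, and attempt to reconstruct $B$ (and hence a reduction peeling it off) from the trinets among representatives of its descendant cut-arc sets. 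This is exactly where the $1$-nested argument specialises: a $1$-nested blob is a single cycle, which is why the cherry/cactus dichotomy and the explicit fourteen-trinet bookkeeping of Lemmas~\ref{fatcherry} and \ref{cactus} sufficed.

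The hard part will be step (b) for a single, arbitrarily complicated blob: proving that a biconnected recoverable network is encoded by the trinets induced on triples of its descendant leaves. Here there is no finite catalogue of trinets to case-split over, and the internal structure can be deeply nested, so the local, fourteen-case detection conditions of the $1$-nested setting must be replaced by global conditions phrased through the vertices $v(Y)$ and the immediate-dominator relation. I would attack this inner problem by a secondary induction on the number of hybrid vertices of $B$, trying to single out a ``lowest'' reticulation whose two parents and pendant leaf can be pinned down from trinets, and then contract it; but it is precisely the detectability of such a reticulation sitting inside nested cycles that is delicate. That the coarse invariants visible to trinets can be misleading is already signalled by Fig.~\ref{not-level-preserving}, where a level-$n$ network displays only level-$3$ trinets, so even extracting a faithful numerical invariant of a blob from its trinets is nonobvious; this is the core reason the statement is only conjectural.

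Throughout, recoverability is essential and should be used exactly as in Proposition~\ref{green-cutvertex1} and Corollary~\ref{not-cut-vertex}: it guarantees that every vertex, and in particular every reticulation, is ``seen'' by some pair of leaves straddling it, which is what gives the detection conditions of (b) any chance of succeeding; without it the portion of the network above $v_N^*$ is genuinely invisible to $Tr(N)$ (Fig.~\ref{not-recoverable}). Granting the blob-reconstruction step, the global proof would then follow the template of Theorem~\ref{encode} verbatim: apply a trinet-detected reduction to both $N$ and $N'$, invoke (c) to obtain $Tr(R_{z:S}(N))=Tr(R_{z:S}(N'))$, conclude $R_{z:S}(N)\cong R_{z:S}(N')$ by induction, and expand back to deduce $N\cong N'$.
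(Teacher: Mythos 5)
This statement is stated in the paper as a \emph{conjecture}: the authors offer no proof of it, and indeed explicitly leave even the level-2 special case as an open ``exercise''. So there is no proof in the paper to compare yours against, and the relevant question is whether your proposal actually closes the gap. It does not, and you say as much yourself. Your ingredients (a)--(c) are a faithful abstraction of the machinery behind Theorem~\ref{encode} (Proposition~\ref{cherryorcactus}, Lemmas~\ref{fatcherry} and \ref{cactus}, and identity~(\ref{identity})), and your base case and the reduction of the problem to biconnected pieces via a tree-of-blobs decomposition are sensible. But the crucial step (b) --- that an arbitrary biconnected recoverable network is encoded by the trinets on its descendant leaves --- is precisely the conjecture restricted to the biconnected case, and you concede you do not know how to prove it. A programme that reduces an open problem to an equally open sub-problem is not a proof, however well organised; your own citation of Fig.~\ref{not-level-preserving} (a level-$n$ network all of whose trinets are level-3) is evidence that the inner induction on hybrid vertices you propose may genuinely fail to find a trinet-detectable ``lowest reticulation''.

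Two further points in your outline are asserted rather than established and would each need real work. First, the claim that cut-arc sets (your tree of blobs) are reconstructible from $Tr(N)$ for an arbitrary recoverable network is an analogy with SN-set techniques for triplets, not an argument; the paper only ever proves such detectability statements (Lemmas~\ref{fatcherry} and \ref{cactus}) by exhaustive case analysis over the fourteen 1-nested trinets, a method unavailable here. Second, the decomposition $Tr(N)=Tr(R_{z:S}(N))\amalg\T_{z:S}$ is verified in the paper only for the four specific 1-nested reductions; for a general ``peel off a lowest blob'' operation one must check that every trinet of the reduced network is literally a trinet of $N$, which is not automatic when the removed blob contributes internal structure to trinets whose three leaves do not all lie below it. In short: you have correctly identified the shape a proof would have to take and, commendably, located exactly where it is stuck, but the statement remains unproven.
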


A first (and probably quite instructive!) `exercise' 
could be to try and show that this conjecture at least 
holds for level-2 networks. Note
that if this conjecture were true, then 
as in Corollary~\ref{metric}, we 
would immediately obtain a new 
proper metric on the set of recoverable
phylogenetic networks on $X$.\\

\noindent {\bf Acknowledgements}
VM thanks the Royal Society for supporting his visit to New Zealand.
Both authors thank  Mike Hendy, David Penny, 
Charles Semple, Peter Stadler and
Mike Steel for hosting them during their sabbatical, during which 
this work was conceived and undertaken.

\end{document}